\newtheorem{theorem}{Theorem}[section]
\newtheorem{cor}[theorem]{Corollary}
\newtheorem{lemma}[theorem]{Lemma}
\newtheorem{define}[theorem]{Definition}
\newtheorem{example}[theorem]{Example}
\numberwithin{equation}{section}
\begin{document}

\title{Self-Dual Codes over  $\mathbb{Z}_2\times (\mathbb{Z}_2+u\mathbb{Z}_2)$  }
%\author{Long Yu, Qiong Huang, Hongwei Liu,{\thanks{Corresponding author. \newline \indent ~~Email addresses:~hwliu@mail.ccnu.edu.cn~(Hongwei Liu),~longyuhbpu@163.com~(Long Yu), 1247698055@qq.com (Qiong Huang).} }}
\author{Long Yu$^{1,2}$, Qiong Huang$^{2}$, Hongwei Liu$^{2}${\thanks{Corresponding author.
\newline \indent ~~Email addresses:~hwliu@mail.ccnu.edu.cn~(Hongwei Liu),~longyuhbpu@163.com~(Long Yu), 1247698055@qq.com~(Qiong Huang), lxs6682@163.com~(Xiusheng Liu).} }, Xiusheng Liu$^{1}$}
\date{$1$  School of Mathematics and Physics, Hubei Polytechnic University, Huangshi,  435003, China\\
$2$ School of Mathematics and Statistics, Central China Normal University, Wuhan,  430079, China}
\maketitle
%\rule{\textwidth}{0.5pt}
\begin{abstract}
In this paper, we study self-dual codes over  $\mathbb{Z}_2 \times (\mathbb{Z}_2+u\mathbb{Z}_2) $, where $u^2=0$. Three types of self-dual codes are defined. For
each type, the possible values $\alpha,\beta$   such that there exists a code $\mathcal{C}\subseteq \mathbb{Z}_{2}^\alpha\times (\mathbb{Z}_2+u\mathbb{Z}_2)^\beta$ are established. We  also present several approaches to  construct  self-dual codes over  $\mathbb{Z}_2 \times (\mathbb{Z}_2+u\mathbb{Z}_2) $.  Moreover, the structure of  two-weight self-dual codes is completely obtained for $\alpha \cdot\beta\neq 0$.
\end{abstract}

%%%%%%%%%%%%%%%%%%%%%%%%%%%%%%%%%%%%%%%%%%%%%%%%%%%%%%%%%%%%%%%%%%%%%%%%%%

{\bf Key Words}\ \ Linear code, Self-dual codes, Two-weight self-dual codes\\

%{\bf Mathematics Subject Classification}
\section{Introduction}
A binary code $C$  of length $n$ over the finite field $\mathbb{Z}_2$ is a subset of $\mathbb{Z}_{2}^{n}$. If $C$ is a vector subspace of $\mathbb{Z}_{2}^{n}$, then we call this
code is  linear. A quaternary code $\mathcal{C}$ is a subset of $\mathbb{Z}_{4}^{n}$ and it is said to be linear if it is a submodule. In \cite{Delsarte1973}, Delsarte proposed the definition of additive codes, which are subgroups of the underlying abelian group in a translation association scheme. In particular, a binary Hamming scheme which is the only structures for the abelian group are those of the form $\mathbb{Z}_{2}^{\alpha}\times \mathbb{Z}_{4}^{\beta}$ when the underlying abelian group is of order $2^{n}$, where $n=\alpha+2\beta$(\cite{Delsarte1998}). Hence, the only additive codes in a binary Hamming scheme are the subgroups $\mathcal{C}$ of $\mathbb{Z}_{2}^{\alpha}\times \mathbb{Z}_{4}^{\beta}$. In order to distinguish them from additive codes over finite fields (see \cite{Bachoc2000,Bierbrauer2005,Blokhuis2004,J-L.Kim2003}), we call them $\mathbb{Z}_{2}\mathbb{Z}_{4}$-additive codes. Foundational results on $\mathbb{Z}_{2}\mathbb{Z}_{4}$-additive codes, including the generator matrix, the existence and the construction of self-dual codes, can be found in \cite{Borges2009}.

Now, we present another important ring $R$ which contains four elements, where $R=\mathbb{Z}_{2}+u\mathbb{Z}_{2}=\{0,1,u,1+u\}$ with $u^{2}=0$. It is well know that the ring $\mathbb{Z}_{2}$ is a subring of the ring $R$. Similar as that $\mathbb{Z}_{2}\mathbb{Z}_{4}-$additive codes, we  define  the following set:
$$\mathbb{Z}_{2}^\alpha \times R^\beta=\{(\mathbf{a},\mathbf{b})\mid~\mathbf{a}\in \mathbb{Z}_{2}^\alpha,~\mathbf{b}\in R^\beta\}.$$

We cannot directly define an algebraic structure endowing the set $\mathbb{Z}_{2}R$, which is not well defined with respect to the usual multiplication by $u\in R$. So this set is not $R$-module. In order to make it well defined and enrich with an algebraic structure we introduce a new multiplication as follows.

Define a map
$$\eta:R\longrightarrow \mathbb{Z}_{2},\eta (r+uq)\longmapsto r,$$
Clearly, the map $\eta$ is a ring homomorphism. Using this map, we can define a scalar multiplication as follows: for $\nu=(a_{1},a_{2},\dots,a_{\alpha}\mid b_{1},b_{2},\dots,b_{\beta})\in \mathbb{Z}_{2}^\alpha\times R^{\beta}$ and $d\in R$, we have
\begin{equation}\label{eq:1.1q}
d\nu=(\eta(d)a_{1},\cdots,\eta(d)a_{\alpha}\mid db_{1},\cdots,db_{\beta}).
\end{equation}

\begin{define} A linear code $\mathcal{C}$ is called a $\mathbb{Z}_2  R$ linear codes if it is a $R$-submodule of $\mathbb{Z}_2^\alpha \times R^\beta$ with respect to the scalar multiplication defined in \eqref{eq:1.1q}. Then the binary image of $\Phi(\mathcal{C})=C$ is called a  $\mathbb{Z}_2 R$-linear code of length $n=\alpha+2\beta$, where $\Phi$ is a map from $\mathbb{Z}_2^\alpha \times R^\beta$ to $\mathbb{Z}_2^n$ defined as \[\Phi(a,b)=(a_1,\cdots,a_{\alpha}\mid \phi(b_1),\cdots,\phi(b_{\beta})),\]
for all $a=(a_0,\cdots,a_{\alpha})\in \mathbb{Z}_2^\alpha$, and $b=(b_1,\cdots,b_{\beta})\in R^\beta$. Furthermore, $\phi: R$ to $\mathbb{Z}_2^2$ is defined by $\phi(0)=(0,0)$, $\phi(1)=(0,1)$, $\phi(u)=(1,1)$, $\phi(1+u)=(1,0)$.
\end{define}

With  the above preparation, Aydogdu et al. in \cite{Ismail Aydogdu2014} obtained the standard form matrix of $\mathbb{Z}_2R$ linear codes.
\begin{theorem}\label{th:standard form}{\rm \cite{Ismail Aydogdu2014}}
Let $\mathcal{C}$ be a $\mathbb{Z}_{2}R$ linear code of type $(\alpha,\beta;\gamma,\delta;\kappa)$.Then $\mathcal{C}$ is permutation equivalent to a $\mathbb{Z}_{2}R$ linear code with the standard form matrix
\begin{equation}\label{eq:standard matrix}
   G =\left(
  \begin{array}{cc|ccc}
    I_{\kappa} & A_{1} & uT & 0 & 0  \\
    0 & 0 & uD & uI_{\gamma-\kappa} & 0 \\
    0 & S & B_{1}+uB_{2} & A & I_{\delta}\\
  \end{array}
\right),
\end{equation}
where $A,A_{1},B_{1},B_{2},D,S$ and $T$ are matrices over $\mathbb{Z}_{2}$.
\end{theorem}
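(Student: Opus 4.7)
The plan is to produce the standard form by a sequence of admissible elementary operations on a generator matrix of $\mathcal{C}$, mirroring the proof strategy used for $\mathbb{Z}_2\mathbb{Z}_4$-additive codes in \cite{Borges2009} but adapted to the ring $R=\mathbb{Z}_2+u\mathbb{Z}_2$. The admissible operations are: (i) permuting rows; (ii) permuting columns within the $\mathbb{Z}_2^\alpha$-block and within the $R^\beta$-block independently (permutations across the two blocks are not allowed, since they do not preserve the ambient $R$-module structure); (iii) adding an $R$-multiple of one row to another, where the scalar multiplication is the mixed one defined in \eqref{eq:1.1q}. The crucial observation is that if one multiplies a row by $u\in R$, then $\eta(u)=0$ forces the $\mathbb{Z}_2^\alpha$-part of that row to become zero, which already explains why the second row block of \eqref{eq:standard matrix} has a zero $\mathbb{Z}_2^\alpha$-part.

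First, I would split the generators of $\mathcal{C}$ into two families according to their order as elements of the $R$-module: the $\delta$ generators whose $R^\beta$-component does not lie in $(u R)^\beta$ (``order $4$'' type) and the $\gamma$ generators whose $R^\beta$-component does lie in $(uR)^\beta$ (``order $2$'' type, possibly with a nonzero $\mathbb{Z}_2^\alpha$-part). By subtracting appropriate $u$-multiples of the order-$4$ generators from each other, I can assume the order-$4$ generators are $R$-linearly independent modulo $u$ on the $R^\beta$-part. Then a column permutation inside the $R^\beta$-block together with row reduction over $R$ brings the $R^\beta$-part of these $\delta$ rows into the block form $(B_1+uB_2 \mid A \mid I_\delta)$. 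This fixes the third row block up to the choice of the $\mathbb{Z}_2^\alpha$-entries, which I record as a matrix $S$.

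Next, consider the order-$2$ generators. Since the $\mathbb{Z}_2^\alpha$-components of the $\delta$ rows just produced can be used to cancel some of the $\mathbb{Z}_2^\alpha$-parts of the order-$2$ generators, one further splits them into two subfamilies: $\kappa$ of them whose $\mathbb{Z}_2^\alpha$-projection is $\mathbb{F}_2$-linearly independent (modulo the $\mathbb{Z}_2^\alpha$-projection of $\mathcal{C}$'s torsion-free part), and the remaining $\gamma-\kappa$ whose $\mathbb{Z}_2^\alpha$-projection can be reduced to zero using the $\delta$ rows from the previous step. Permuting the first $\alpha$ columns, row reduction over $\mathbb{Z}_2$ puts the $\mathbb{Z}_2^\alpha$-part of the first $\kappa$ rows into $(I_\kappa\mid A_1)$; on these rows the $R^\beta$-part must lie in $(uR)^\beta$, so it has the shape $(uT\mid 0\mid 0)$ after a suitable permutation of the $R^\beta$-columns compatible with the reduction performed in step one. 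This yields the first row block. The remaining $\gamma-\kappa$ rows have a vanishing $\mathbb{Z}_2^\alpha$-part and lie in $u\cdot R^\beta$; applying Smith normal form over $R$ (restricted to $(u)\cong \mathbb{Z}_2$) to these rows, and permuting the $R^\beta$-columns, produces the middle block $(uD\mid uI_{\gamma-\kappa}\mid 0)$.

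Finally, back-substitution cleans the zero blocks in positions forced by the identity submatrices $I_\kappa$, $uI_{\gamma-\kappa}$, $I_\delta$: for instance, subtracting multiples of the third row block from the first kills the $\mathbb{Z}_2^\alpha$- and $R^\beta$-coordinates corresponding to the last $\delta$ columns of the $R^\beta$-part of the first block, giving the displayed zeros. I expect the main technical difficulty to be bookkeeping the invariance of $\kappa$ under these reductions; one must show that the number of order-$2$ rows needed with an independent binary part does not depend on the intermediate choices, which is equivalent to identifying $\kappa$ intrinsically as $\dim_{\mathbb{F}_2}\{x\in\mathbb{Z}_2^\alpha:(x\mid 0)\in\mathcal{C}\}$ and checking that the reduction process matches this invariant. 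Once that is established, the proof follows from the orthogonality pattern of the identity submatrices.
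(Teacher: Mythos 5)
The paper offers no proof of this theorem: it is quoted directly from \cite{Ismail Aydogdu2014}, so there is no in-paper argument to compare yours against. Your overall strategy --- reduction by row operations over $R$ and block-wise column permutations, splitting generators into those whose $R^\beta$-part contains a unit and those whose $R^\beta$-part is torsion, and the key observation that $\eta(u)=0$ forces the $\mathbb{Z}_2^\alpha$-part of any $u$-multiple to vanish --- is the standard and correct route for a theorem of this kind.

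However, two specific steps in your sketch would fail as written. First, you propose to reduce the binary projections of the remaining $\gamma-\kappa$ order-$2$ generators to zero ``using the $\delta$ rows from the previous step.'' To change the $\mathbb{Z}_2^\alpha$-part of an order-$2$ row by adding $\lambda$ times an order-$4$ row you need $\eta(\lambda)=1$, i.e.\ $\lambda$ a unit; but then $\lambda$ times that row still carries a unit in its $R^\beta$-part, so the sum is no longer torsion and cannot produce the middle block $(0\;0\mid uD\;uI_{\gamma-\kappa}\;0)$. (Multiplying the order-$4$ row by $u$ first preserves torsion but annihilates its binary part, so it cleans nothing.) The elimination must instead be carried out entirely inside the torsion subcode $\mathcal{C}_b$: the binary projections of the order-$2$ generators span a $\kappa$-dimensional space, and you use $\kappa$ of them to clear the binary parts of the others. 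Second, your intrinsic characterization of $\kappa$ as $\dim_{\mathbb{F}_2}\{x\in\mathbb{Z}_2^\alpha:(x\mid 0)\in\mathcal{C}\}$ is not the right invariant; the paper defines $\kappa=\dim(\mathcal{C}_b)_X$, where $\mathcal{C}_b$ consists of all codewords whose $R^\beta$-part lies in $\{0,u\}^\beta$. These genuinely differ: for the code of Example~\ref{ex:2.7}, of type $(4,2;2,1;2)$, one has $\kappa=2$ while $\{x:(x\mid 0)\in\mathcal{C}\}=\{\mathbf{0},(1111)\}$ has dimension $1$. Your alternative description (independence ``modulo the projection of the torsion-free part'') is likewise not the correct invariant in general. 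Reorganizing the second stage of the reduction around $\mathcal{C}_b$, with $\kappa=\dim(\mathcal{C}_b)_X$, repairs both points and the rest of the sketch then goes through.
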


From Theorem~\ref{th:standard form}, it is easy  that  see $\mathbb{Z}_{2}R$ linear code is isomorphic to $\mathbb{Z}_{2}^{\gamma}\times \mathbb{Z}_{2}^{2\delta}$, and it has $|\mathcal{C}|=2^{\gamma+2\delta}$. Moreover, having the generator matrix as above, we say that $\mathcal{C}$ is of type $(\alpha,\beta;\gamma,\delta;\kappa)$. Also, $\kappa$ can be defined as follows:

Let $X$(respectively $Y$) be the set of $\mathbb{Z}_{2}$ (respectively $R$) coordinates positions, hence $|X|=\alpha$ (respectively $|Y|=\beta$). Unless otherwise stated, the $X$ corresponds to the first $\alpha$ coordinates and $Y$ corresponds to the last $\beta$ coordinates. Call $\mathcal{C}_{X}$ (respectively $\mathcal{C}_{Y}$) the punctured code of $\mathcal{C}$ by deleting the coordinates outside $X$ (respectively $Y$). Let $\mathcal{C}_{b}$ be the subcode of $\mathcal{C}$ which contains all codewords having the form of $(x|y_{1},y_{2},\dots,y_{\beta})$, where $x\in \mathbb{Z}_{2}^{\alpha},y_{i}\in \{0,u\},i=1,2,\dots,\beta$. Then $\kappa=dim(\mathcal{C}_{b})_{X}$. For the case $\alpha=0$, we will take $\kappa=0$.

\begin{define}An inner product for two vectors $\mathbf{v}=(v_1,\cdots,v_\alpha|v_{\alpha+1},\cdots,v_{\alpha+\beta}),  \mathbf{w}=(w_1,\cdots,w_\alpha|w_{\alpha+1},\cdots,w_{\alpha+\beta})\in \mathbb{Z}_{2}^{\alpha}\times (\mathbb{Z}_{2}+u\mathbb{Z}_{2})^{\beta}$ is defined as
\begin{equation}\label{eq:innerproduct}
\langle \mathbf{v},\mathbf{w}\rangle=u\left(\sum\limits_{i=1}^\alpha v_{i}w_{i}\right)+\sum\limits_{j=\alpha+1}^{\alpha+\beta} v_{j}w_{j}\in R=\mathbb{Z}_{2}+u\mathbb{Z}_{2}.
\end{equation}
\end{define}

Hence, we have the  definition of dual codes.
\begin{define}
Let $\mathcal{C}$ be a $\mathbb{Z}_{2}R$  code. We denote the  dual of $\mathcal{C}$ by $\mathcal{C}^\perp$, which is defined  as
$$\mathcal{C}^\perp=\{\mathbf{w}\in \mathbb{Z}_{2}^{\alpha}\times (\mathbb{Z}_{2}+u\mathbb{Z}_{2})^{\beta}\mid \langle \mathbf{v},\mathbf{w}\rangle=0~{\rm for~all}~ \mathbf{v}\in \mathcal{C}\}.$$
We say that $\mathcal{C}$ is self-orthogonal if and only if $\mathcal{C}\subseteq \mathcal{C}^\perp$ and $\mathcal{C}$ is  self-dual if and only if $\mathcal{C}=\mathcal{C}^\perp$.
\end{define}

%Let $C=\Phi(\mathcal{C})$ be the corresponding $\mathbb{Z}_{2}R-$linear code.
%We call $\mathcal{C}$ is a self $\mathbb{Z}_{2}R-$dual code if $C=C_{\perp}$ where $C_{\perp}=\mathcal{C}^\perp$.

Following above definitions, the standard form generator matrix of  $\mathcal{C}^\perp$ can be obtained.
\begin{theorem}\label{th:dualcodestandardform}{\rm \cite{Ismail Aydogdu2014}}
Let $\mathcal{C}$ be a $\mathbb{Z}_{2}R$ linear code of type $(\alpha,\beta;\gamma,\delta;\kappa)$ with standard form matrix defined in \eqref{eq:standard matrix}. Then the generator matrix of $\mathcal{C}^{\perp}$ is given as
$$
H=\left(
  \begin{array}{cc|ccc}
     A_{1}^{t}& I_{\alpha-\kappa} & 0 & 0 & uS^{t}  \\
    0 & 0 & 0 & uI_{\gamma-\kappa} & uA^{t} \\
    T^{t} & 0 & I_{\beta+\kappa-\gamma-\delta} & D^{t} &  (B_{1}+uB_{2})^{t}+D^{t}A^{t}\\
  \end{array}
\right),
$$
where $A,A_{1},B_{1},B_{2},D$ and $T$ are matrices over $\mathbb{Z}_{2}$.
\end{theorem}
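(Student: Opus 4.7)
The plan is to verify two facts that together force $H$ to generate $\mathcal{C}^\perp$: first, that every row of $H$ is orthogonal to every row of $G$ under the inner product \eqref{eq:innerproduct}, so that $\langle H\rangle\subseteq \mathcal{C}^\perp$; and second, that $|\langle H\rangle|=2^{\alpha+2\beta-\gamma-2\delta}$, which equals $|\mathcal{C}^\perp|$ and promotes the inclusion to equality.

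For the orthogonality, I would partition $G$ into its three row-blocks $G_1,G_2,G_3$ of sizes $\kappa,\gamma-\kappa,\delta$ and $H$ into $H_1,H_2,H_3$ of sizes $\alpha-\kappa,\gamma-\kappa,\beta+\kappa-\gamma-\delta$, and check each of the nine ``block inner products'' $\langle G_i,H_j^t\rangle$ separately. By \eqref{eq:innerproduct} each such block decomposes as $u$ times the cross product of the binary parts plus the cross product of the $R$-parts. Four off-diagonal blocks, namely $G_1H_2^t,\,G_2H_1^t,\,G_2H_3^t,\,G_3H_2^t$, vanish by inspection because one of the two factors is identically zero in every relevant coordinate. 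The three diagonal blocks collapse quickly: $G_2H_2^t$ uses $u^2=0$; $G_1H_1^t$ uses $u(A_1+A_1)=0$ in characteristic two; and $G_3H_3^t$ telescopes to $((B_1+uB_2)+AD)+((B_1+uB_2)+AD)=0$ after the transpositions are taken. The remaining cases $G_1H_3^t$ and $G_3H_1^t$ are the crux: in each, a contribution $uT$ (respectively $uS$) arising from the $u$-weighted binary sum is matched by an equal contribution from the identity block on the $R$-side, and the two cancel in characteristic two.

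For the cardinality, I would exploit the pivot structure of $H$. The three identity blocks $I_{\alpha-\kappa}$, $uI_{\gamma-\kappa}$, and $I_{\beta+\kappa-\gamma-\delta}$ occupy pairwise disjoint coordinate columns, so the rows of $H$ are $R$-linearly independent. The first two row-blocks consist of $R$-order-two generators (their $R$-parts lie in $u\mathbb{Z}_2^\beta$) and the third consists of $R$-order-four generators, yielding $|\langle H\rangle|=2^{(\alpha-\kappa)+(\gamma-\kappa)}\cdot 4^{\beta+\kappa-\gamma-\delta}=2^{\alpha+2\beta-\gamma-2\delta}$. Since $|\mathcal{C}|=2^{\gamma+2\delta}$ and the inner product \eqref{eq:innerproduct} is non-degenerate on $\mathbb{Z}_2^\alpha\times R^\beta$ (as one sees immediately by pairing an arbitrary $\mathbf{v}$ against the standard basis vectors $e_i$ in $\mathbb{Z}_2^\alpha$ and $e_i,\,ue_i$ in $R^\beta$), the general duality identity $|\mathcal{C}|\cdot|\mathcal{C}^\perp|=2^{\alpha+2\beta}$ applies, giving $|\langle H\rangle|=|\mathcal{C}^\perp|$ and hence the desired equality.

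The main obstacle is the cancellation in the two cross blocks $G_1H_3^t$ and $G_3H_1^t$. This is precisely the place where the peculiar $u$-weighting on the binary part of \eqref{eq:innerproduct} is genuinely exploited: without it those blocks would fail to vanish, and both halves of the inner product must be bookkept simultaneously since their contributions land in the same $u$-socle of $R$. Everything else in the argument is either a direct substitution in a block matrix product or a routine pivot/cardinality count.
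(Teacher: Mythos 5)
The paper offers no proof of this theorem: it is quoted from the cited reference on $\mathbb{Z}_{2}\mathbb{Z}_{2}[u]$-additive codes, so there is no internal argument to compare against. Your two-step strategy (row-by-row orthogonality of $H$ against $G$, then a cardinality count) is the standard route for such statements and is essentially correct, but two details need tightening. First, your sorting of the nine blocks is slightly off: $G_2H_3^t$ and $G_3H_2^t$ do \emph{not} vanish because one factor is identically zero on the relevant coordinates. For $G_2H_3^t$ the $R$-parts overlap and the product is $uD\cdot I+uI_{\gamma-\kappa}\cdot D=uD+uD=0$; likewise $G_3H_2^t$ gives $A\cdot uI_{\gamma-\kappa}+I_\delta\cdot uA=uA+uA=0$. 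These are cancellations of the same kind as your two ``crux'' blocks, just without a binary contribution; the conclusion is unaffected, but the bookkeeping as written is wrong for those two blocks.

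Second, the identity $|\mathcal{C}|\cdot|\mathcal{C}^\perp|=2^{\alpha+2\beta}$ does not follow from non-degeneracy alone, since the ambient space is a module over the non-field ring $R$ and the pairing is $R$-valued; for general non-degenerate pairings over such rings the naive counting can fail. The standard repair: check that $\langle d\mathbf{v},\mathbf{w}\rangle=d\langle\mathbf{v},\mathbf{w}\rangle$ (using $du=\eta(d)u$), so that $\{\langle\mathbf{v},\mathbf{w}\rangle:\mathbf{v}\in\mathcal{C}\}$ is an ideal of $R$; then compose with the character $\chi(a+bu)=(-1)^b$, whose kernel $\{0,1\}$ contains no nonzero ideal, to identify $\mathcal{C}^\perp$ with the annihilator of $\mathcal{C}$ in the character group of $\mathbb{Z}_2^\alpha\times R^\beta$ and deduce the counting identity. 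With that supplied, your order analysis of the rows of $H$ (orders $2,2,4$ for the three row blocks, pivots in disjoint columns) correctly gives $|\langle H\rangle|=2^{\alpha-\kappa}\cdot2^{\gamma-\kappa}\cdot4^{\beta+\kappa-\gamma-\delta}=2^{\alpha+2\beta-\gamma-2\delta}=|\mathcal{C}^\perp|$ and closes the proof.
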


From Theorem~\ref{th:dualcodestandardform}, we know that the  dual code $\mathcal{C}^{\perp}$ is  a $\mathbb{Z}_{2}R$ linear code of type $(\alpha,\beta;\bar{\gamma},\bar{\delta};\bar{\kappa})$, where
\[
\begin{cases}
\bar{\kappa}=\alpha-\kappa;\\
\bar{\gamma}=\alpha+\gamma-2\kappa;\\
\bar{\delta}=\beta-\gamma-\delta+\kappa\\
\end{cases}
\]

Let $(\textbf{v}| \textbf{w})=(v_{1},\dots,v_{\alpha}\mid w_{1},\dots,w_{\beta})\in \mathbb{Z}_{2}^{\alpha}\times R^{\beta}$. The Gray map defined on $R$ can be expressed as follows: $\phi(a+bu)=(b,a+b),a,b\in \mathbb{Z}_{2}$ and the Lee weight is $wt_{L}(a+bu)=wt_{H}(b,a+b)$. Hence, $wt_L(\textbf{v}|\textbf{w})=wt_{H}(\textbf{v} )+wt_{L}(\textbf{w})$.

Let $\mathcal{C}$ be a $\mathbb{Z}_2R $ linear code of type $(\alpha,\beta;\gamma,\delta;\kappa)$ and $n=\alpha+2\beta$. The weight enumerator of  code $\mathcal{C}$ is defined as
$$W(X,Y)=\sum\limits_{c\in \mathcal{C}}X^{n-wt_L(c)}Y^{wt_L(c)}.$$

Aydogdu et al. \cite{Ismail Aydogdu2014} gave the following result.
\begin{theorem}\label{th:MacIden}{\rm \cite{Ismail Aydogdu2014}}
Let $\mathcal{C}$ be a $\mathbb{Z}_{2}R$ linear code.The relation between the weight enumerators of $\mathcal{C}$ and $\mathcal{C}^\perp$  is given by the following identity:
$$W_{\mathcal{C}^{\perp}}(X,Y)=\frac{1}{\mid\mathcal{C}\mid}W_{\mathcal{C}}(X+Y,X-Y).$$
\end{theorem}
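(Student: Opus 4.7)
The plan is to prove this MacWilliams-type identity by the classical character-theoretic (Poisson summation) argument, adapted to the mixed alphabet $\mathbb{Z}_2^\alpha\times R^\beta$ with respect to the inner product in \eqref{eq:innerproduct}. First, I would define a nontrivial character of the additive group $(R,+)$ by $\psi(a+bu)=(-1)^b$, which in particular satisfies $\psi(u\cdot k)=(-1)^k$ for $k\in\mathbb{Z}_2$. This is precisely the character that is compatible with the hybrid inner product: for $\mathbf{v},\mathbf{w}\in\mathbb{Z}_2^\alpha\times R^\beta$, one defines
$$\chi_{\mathbf{w}}(\mathbf{v})=\psi(\langle\mathbf{v},\mathbf{w}\rangle)=\prod_{i=1}^{\alpha}(-1)^{v_iw_i}\cdot\prod_{j=\alpha+1}^{\alpha+\beta}\psi(v_jw_j),$$
so that the family $\{\chi_{\mathbf{w}}\}$ ranges over all characters of the ambient abelian group, and the standard orthogonality relation gives $\sum_{\mathbf{v}\in\mathcal{C}}\chi_{\mathbf{w}}(\mathbf{v})=|\mathcal{C}|$ if $\mathbf{w}\in\mathcal{C}^\perp$ and $0$ otherwise.

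Next, I would set $f(\mathbf{v})=X^{n-wt_L(\mathbf{v})}Y^{wt_L(\mathbf{v})}$, which factors coordinatewise as $f(\mathbf{v})=\prod_{i=1}^\alpha f_1(v_i)\cdot\prod_{j=\alpha+1}^{\alpha+\beta}f_2(v_j)$, where $f_1(0)=X$, $f_1(1)=Y$ (contributing one binary position), and $f_2$ assigns $X^2,XY,Y^2,XY$ to the elements $0,1,u,1+u$ of $R$ according to the Lee weight (each $R$-coordinate contributing two binary positions). The Fourier transform $\hat{f}(\mathbf{w})=\sum_{\mathbf{v}}\chi_{\mathbf{w}}(\mathbf{v})f(\mathbf{v})$ then factors as a product of one-coordinate transforms. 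A direct case check on the single-coordinate transforms shows
$$\hat{f}_1(w)=X+(-1)^wY,\qquad \hat{f}_2(w)=(X+Y)^{2-wt_L(w)}(X-Y)^{wt_L(w)},$$
for each $w\in R$; the latter is verified by running through $w\in\{0,1,u,1+u\}$ and using $\psi(uv)=(-1)^{\eta(v)}$ and $\psi(v)=(-1)^{\lfloor v/u\rfloor}$-type relations.

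Multiplying the single-coordinate transforms yields the clean formula
$$\hat{f}(\mathbf{w})=(X+Y)^{n-wt_L(\mathbf{w})}(X-Y)^{wt_L(\mathbf{w})}.$$
Finally, I would apply Poisson summation in the form
$$\sum_{\mathbf{w}\in\mathcal{C}^\perp}f(\mathbf{w})=\frac{1}{|\mathcal{C}|}\sum_{\mathbf{v}\in\mathcal{C}}\hat{f}(\mathbf{v}),$$
which follows by swapping the order of summation and using the orthogonality of $\chi_{\mathbf{w}}$. Recognizing the left side as $W_{\mathcal{C}^\perp}(X,Y)$ and the right side as $\frac{1}{|\mathcal{C}|}W_{\mathcal{C}}(X+Y,X-Y)$ completes the proof.

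The main obstacle I anticipate is precisely the per-coordinate Fourier computation on $R$: one must choose the character $\psi$ so that it is simultaneously compatible with (i) the scalar $u$ appearing in front of the binary part of the inner product and (ii) the Lee weight obtained via the Gray map $\phi$. Any other natural character of $R$ (for instance $\psi(a+bu)=(-1)^a$) fails to give the right one-variable transform on the $R$-coordinates, so the choice $\psi(a+bu)=(-1)^b$ is essentially forced, and the rest of the argument is bookkeeping.
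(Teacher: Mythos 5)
The paper does not actually prove this identity --- Theorem~\ref{th:MacIden} is quoted without proof from \cite{Ismail Aydogdu2014} --- so there is no internal argument to compare yours against; your character-theoretic Poisson-summation proof is the natural way to supply one, and it is essentially correct. I checked the per-coordinate transforms: with $\psi(a+bu)=(-1)^b$ one gets $\hat f_2(0)=(X+Y)^2$, $\hat f_2(1)=\hat f_2(1+u)=(X+Y)(X-Y)$ and $\hat f_2(u)=(X-Y)^2$, matching $(X+Y)^{2-wt_L(w)}(X-Y)^{wt_L(w)}$, and as $w$ runs over $R$ the maps $v\mapsto\psi(vw)$ give all four characters of $(R,+)$, so the pairing is non-degenerate and the factorization of $\hat f$ is right. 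The one step you state too quickly is the orthogonality relation. Since $(R,+)\cong\mathbb{Z}_2\times\mathbb{Z}_2$ has exponent $2$, no character of $R$ is faithful; here $\ker\psi=\{0,1\}$. Consequently ``$\chi_{\mathbf{w}}$ is trivial on $\mathcal{C}$'' only yields $\langle\mathbf{v},\mathbf{w}\rangle\in\{0,1\}$ for all $\mathbf{v}\in\mathcal{C}$, which is strictly weaker than $\mathbf{w}\in\mathcal{C}^\perp$ for a general additive subgroup (e.g.\ $\mathcal{C}=\{0,1\}\subseteq R$ with $\alpha=0$, $\beta=1$ breaks the claimed equivalence). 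You must invoke the $R$-module structure of $\mathcal{C}$ to close this: if $\langle\mathbf{v},\mathbf{w}\rangle=1$ for some $\mathbf{v}\in\mathcal{C}$, then $u\mathbf{v}\in\mathcal{C}$ and $\langle u\mathbf{v},\mathbf{w}\rangle=u$, whence $\chi_{\mathbf{w}}(u\mathbf{v})=\psi(u)=-1$, contradicting triviality; so triviality of $\chi_{\mathbf{w}}$ on $\mathcal{C}$ does force $\mathbf{w}\in\mathcal{C}^\perp$. With that one line added (and setting aside the meaningless expression $(-1)^{\lfloor v/u\rfloor}$, which the explicit case check renders harmless), the argument is complete and proves the stated identity.
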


Based on  above  definitions, we can construct a self-dual code $\mathcal{C}_1$  over $\mathbb{Z}_{2}^4R^2$ of type $(4,2;2,1;2)$, which is generated by the following matrix
$$\left(\begin{array}{cccc|cc}
          1 & 0 & 1 & 0 & u & 0 \\
          0 & 1 & 0 & 1 & u & 0 \\
          0 & 0 & 1 & 1 & 1 & 1
        \end{array}
\right).$$
It is easy to get the weight enumerator of $\mathcal{C}_1$ is
\[W(X,Y)=X^{8}+14X^4Y^4+Y^{8}, \]
which implies that all codewords have doubly-even weight.

In \cite{Borgesselfdual}, Borges et al. studied self-dual codes over $\mathbb{Z}_{2}\mathbb{Z}_{4}$. Three types of self-dual codes are defined. For
each type, the possible values $\alpha,\beta$   such that there exists a code $\mathcal{C}\subseteq \mathbb{Z}_{2}^\alpha\times \mathbb{Z}_{4}^\beta$ are established.
Borges et al. defined a self-dual code is Type II if all the codewords have doubly-even weight. And they showed  if $\mathcal{C} $ is  type II  $\mathbb{Z}_{2}\mathbb{Z}_{4}$-additive code, then $\alpha\equiv 0~({\rm mod} ~8)$. Recall that linear code $\mathcal{C}_1$ over $\mathbb{Z}_{2}^4R^2$, it is easy to find this case dosenot exist in \cite{Borgesselfdual}. Motivated by this discovery, we furthermore study self-dual codes over $\mathbb{Z}_{2}R$. Similar as that in \cite{Borgesselfdual}, three types of self-dual codes are defined. We also give the existence condition for each type and present several approaches to construct self-dual codes. Finally, we study self-dual codes over $\mathbb{Z}_{2}R$ with two nonzero weights, and the structure of these codes is described.

This paper is organized as follows. Section $2$, we study the properties of self-dual codes over $\mathbb{Z}_2R$, and give the existence conditions for three types. In Section $3$, we give several approaches of constructing  self-dual codes over $\mathbb{Z}_2R$. Section $4$, we determine the structure of   two-weight self-dual codes over $\mathbb{Z}_{2}R$ for $\alpha\cdot\beta\neq0$.

\section{Self-dual codes over $\mathbb{Z}_2R$}

%In this section, we will give some necessary conditions on $\mathbb{Z}_2 \mathbb{Z}_2[u]$-additive self-dual codes. Furthermore, we give a sufficient and necessary condition of $\mathbb{Z}_2 \mathbb{Z}_2[u]$-additive self-dual codes in some special case.

\begin{lemma}\label{lem:selfdualcanshu}
Let $\mathcal{C}$ be a $\mathbb{Z}_2R$ linear self-dual code, then $\mathcal{C}$ is of type $(2\kappa,\beta;\beta+\kappa-2\delta,\delta;\kappa)$, $| \mathcal{C}|=2^{\kappa+\beta}$ and $\mathcal{C}_b=2^{\kappa+\beta-\delta}$.
\end{lemma}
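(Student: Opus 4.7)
\medskip

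\noindent\textbf{Proof plan.} The strategy is to read off the first two equalities by matching the type of $\mathcal{C}$ with the type of $\mathcal{C}^{\perp}$ given by Theorem~\ref{th:dualcodestandardform}, and then to read off the size of $\mathcal{C}_{b}$ directly from the standard generator matrix of Theorem~\ref{th:standard form}.

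First I would use self-duality in the form $\mathcal{C}=\mathcal{C}^{\perp}$ to conclude that the triples $(\bar\kappa,\bar\gamma,\bar\delta)$ and $(\kappa,\gamma,\delta)$ coincide. Solving $\bar\kappa=\alpha-\kappa=\kappa$ gives $\alpha=2\kappa$; solving $\bar\delta=\beta-\gamma-\delta+\kappa=\delta$ gives $\gamma=\beta+\kappa-2\delta$; the remaining relation $\bar\gamma=\alpha+\gamma-2\kappa=\gamma$ is then automatic. Plugging these into $|\mathcal{C}|=2^{\gamma+2\delta}$ yields $|\mathcal{C}|=2^{\beta+\kappa}$, which handles the first three claims.

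The less routine step is the computation of $|\mathcal{C}_{b}|$. I would work directly with the standard form matrix $G$ in \eqref{eq:standard matrix}. Write a typical codeword as $\lambda_{1}G_{(1)}+\lambda_{2}G_{(2)}+\mu\, G_{(3)}$, where $G_{(1)},G_{(2)},G_{(3)}$ denote the three block rows and $\lambda_{1}\in R^{\kappa}$, $\lambda_{2}\in R^{\gamma-\kappa}$, $\mu\in R^{\delta}$. Using the scalar rule \eqref{eq:1.1q} together with $u^{2}=0$, one checks that $u\cdot G_{(1)}=0$ and $u\cdot G_{(2)}=0$, so only the residues of $\lambda_{1},\lambda_{2}$ modulo $u$ matter; these contribute $2^{\kappa}\cdot 2^{\gamma-\kappa}=2^{\gamma}$ distinct combinations, and their $R$-parts automatically lie in $\{0,u\}^{\beta}$. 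For the block $G_{(3)}$, the presence of $I_{\delta}$ in the $R$-part forces the $R$-coordinates of $\mu\cdot G_{(3)}$ to lie in $\{0,u\}^{\beta}$ if and only if $\mu\in\{0,u\}^{\delta}$, giving an extra factor of $2^{\delta}$. Hence $|\mathcal{C}_{b}|=2^{\gamma+\delta}$, and substituting $\gamma=\beta+\kappa-2\delta$ gives $|\mathcal{C}_{b}|=2^{\kappa+\beta-\delta}$.

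The main obstacle is really just a bookkeeping one: verifying that no $R$-scalar combinations on $G_{(3)}$ outside $\{0,u\}^{\delta}$ can be cancelled by contributions from $G_{(1)},G_{(2)}$ in the $R$-part. This is clear because the $I_{\delta}$ block sits in the last $\delta$ of the $R$-coordinates, where $G_{(1)}$ and $G_{(2)}$ contribute zero; so the argument reduces to coordinate-wise inspection and no deeper module-theoretic input is needed beyond Theorems~\ref{th:standard form} and~\ref{th:dualcodestandardform}.
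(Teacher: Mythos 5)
Your proposal is correct and follows the same route the paper intends: the paper's proof is a one-line appeal to Theorems~\ref{th:standard form} and~\ref{th:dualcodestandardform}, and your argument simply supplies the details — matching $(\bar\kappa,\bar\gamma,\bar\delta)$ with $(\kappa,\gamma,\delta)$ to get $\alpha=2\kappa$ and $\gamma=\beta+\kappa-2\delta$, then reading $|\mathcal{C}_b|=2^{\gamma+\delta}$ off the standard generator matrix. Your explicit check that the $I_\delta$ block prevents any cancellation in the last $\delta$ coordinates of the $R$-part is exactly the point the paper leaves implicit, and it is handled correctly.
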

\begin{proof} By Theorem~\ref{th:standard form} and Theorem~\ref{th:dualcodestandardform}, we finish the proof.\end{proof}

Hence, we have the following result.
\begin{cor}
Let $\mathcal{C}$ be  a $\mathbb{Z}_2R$ linear self-dual code of type $(\alpha,\beta;\gamma,\delta;\kappa)$ , then $\alpha$ and $n$ are both even.
\end{cor}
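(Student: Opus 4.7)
The plan is to read off both statements directly from Lemma~\ref{lem:selfdualcanshu}, which has just been established. That lemma tells us that any $\mathbb{Z}_2R$ linear self-dual code of type $(\alpha,\beta;\gamma,\delta;\kappa)$ must in fact satisfy $\alpha = 2\kappa$, and this is essentially the whole content of what we need.

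First I would invoke Lemma~\ref{lem:selfdualcanshu} to conclude $\alpha = 2\kappa$, which is visibly even since $\kappa$ is a nonnegative integer. Then, using $n = \alpha + 2\beta$ (the definition of the binary length given just before Theorem~\ref{th:MacIden}), I would substitute to get $n = 2\kappa + 2\beta = 2(\kappa + \beta)$, which is even as well.

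The proof therefore consists of two one-line substitutions and there is no genuine obstacle; the real work was done inside Lemma~\ref{lem:selfdualcanshu}, where the constraints coming from comparing the type parameters of $\mathcal{C}$ in Theorem~\ref{th:standard form} with those of $\mathcal{C}^\perp$ in Theorem~\ref{th:dualcodestandardform} force $\bar\kappa=\kappa$ and hence $\alpha = 2\kappa$. If anything needs emphasis in the write-up, it is simply citing Lemma~\ref{lem:selfdualcanshu} for the equality $\alpha = 2\kappa$ so that the reader sees why the parity claim is automatic.
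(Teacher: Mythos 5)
Your proposal is correct and matches the paper's (implicit) argument exactly: the corollary is stated immediately after Lemma~\ref{lem:selfdualcanshu} with the word ``Hence,'' meaning the intended proof is precisely to read off $\alpha=2\kappa$ from that lemma and then note $n=\alpha+2\beta=2(\kappa+\beta)$. Nothing further is needed.
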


\begin{lemma}\label{wandNwareeven}
Let $\mathcal{C}$ be a $\mathbb{Z}_2R$ linear self-dual code and $(\mathbf{v}| \mathbf{w})\in \mathcal{C}$. Denote $N(\mathbf{w})$  as  the number of unit ($1$ or $1+u$) coordinates of vector $\mathbf{w}\in R^\beta$, then $wt_H(\mathbf{v})$ and $N(\mathbf{w})$ are both even. Moreover, we have $(\mathbf{1}^\alpha,\mathbf{0}^\beta)$, $(\mathbf{{0}}^\alpha,\mathbf{u}^\beta)$  and $(\mathbf{1}^\alpha,\mathbf{u}^\beta)$ are all in $\mathcal{C}$, where $\mathbf{a}^r$ is defined as the tuple $\overbrace{(a,a,\cdots,a)}^r$.
\end{lemma}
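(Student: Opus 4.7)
The plan is to extract both parity statements from the single identity $\langle\mathbf{c},\mathbf{c}\rangle=0$ by separating it into its $\mathbb{Z}_2$- and $u\mathbb{Z}_2$-components in $R=\mathbb{Z}_2\oplus u\mathbb{Z}_2$, and then to use the two parities to verify orthogonality of each of the three distinguished vectors against an arbitrary $(\mathbf{v}|\mathbf{w})\in\mathcal{C}$; self-duality $\mathcal{C}=\mathcal{C}^\perp$ does the rest.

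For the parities, I would fix $(\mathbf{v}|\mathbf{w})\in\mathcal{C}$ and write each $w_j=a_j+ub_j$ with $a_j,b_j\in\mathbb{Z}_2$. Using $2=0$ and $u^2=0$ in $R$, we have $w_j^2=a_j$ and $v_i^2=v_i$, so \eqref{eq:innerproduct} yields
\[
\langle(\mathbf{v}|\mathbf{w}),(\mathbf{v}|\mathbf{w})\rangle=u\sum_{i=1}^\alpha v_i+\sum_{j=1}^\beta a_j=u\cdot wt_H(\mathbf{v})+N(\mathbf{w}),
\]
since $a_j=1$ precisely when $w_j$ is a unit. Self-duality forces this element to be $0\in R$, which, read off in the decomposition $R=\mathbb{Z}_2\oplus u\mathbb{Z}_2$, says exactly that $N(\mathbf{w})$ and $wt_H(\mathbf{v})$ are both even.

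For the three distinguished codewords, I would check directly that each is in $\mathcal{C}^\perp=\mathcal{C}$ by computing pairings against an arbitrary $(\mathbf{v}|\mathbf{w})\in\mathcal{C}$. The definition \eqref{eq:innerproduct} gives
\[
\langle(\mathbf{1}^\alpha,\mathbf{0}^\beta),(\mathbf{v}|\mathbf{w})\rangle=u\cdot wt_H(\mathbf{v})=0,\qquad\langle(\mathbf{0}^\alpha,\mathbf{u}^\beta),(\mathbf{v}|\mathbf{w})\rangle=u\sum_{j=1}^\beta a_j=u\cdot N(\mathbf{w})=0,
\]
where in the second equality I used $u\cdot w_j=ua_j$ because $u^2=0$. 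The pairing with $(\mathbf{1}^\alpha,\mathbf{u}^\beta)$ is the sum of these two and therefore also vanishes.

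The only conceptual point---and the main (minor) obstacle---is recognising that $\langle\mathbf{c},\mathbf{c}\rangle$ lands in $R$, not in $\mathbb{Z}_2$, so the single scalar equation $\langle\mathbf{c},\mathbf{c}\rangle=0$ really encodes two independent parity constraints, one from each $\mathbb{Z}_2$-summand of $R$. Once that observation is made, both the parity claims and the membership of the three vectors in $\mathcal{C}$ follow from unwinding the definitions using $u^2=0$ and the self-dual hypothesis.
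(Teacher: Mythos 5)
Your proof is correct and follows essentially the same route as the paper: compute the self-pairing $\langle(\mathbf{v}|\mathbf{w}),(\mathbf{v}|\mathbf{w})\rangle=u\cdot wt_H(\mathbf{v})+N(\mathbf{w})=0\in R$, read off both parities from the two $\mathbb{Z}_2$-components of $R$, and then deduce membership of the three all-ones/all-$u$ vectors from self-duality. You merely spell out the orthogonality checks for $(\mathbf{1}^\alpha,\mathbf{0}^\beta)$ and $(\mathbf{0}^\alpha,\mathbf{u}^\beta)$, which the paper leaves implicit.
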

\begin{proof} Since $\mathcal{C}$ is a self-dual code, then for any codeword $(\mathbf{v}\mid \mathbf{w})\in \mathcal{C}$,   we have $\langle(\mathbf{v}| \mathbf{w}),(\mathbf{v}| \mathbf{w})\rangle=u\cdot wt_H(\mathbf{v})+N(\mathbf{w})=0\in R$. Note that  $wt_H(\mathbf{v})$ and $N(\mathbf{w})$ are all integers, so $wt_H(\mathbf{v})$ and $N(\mathbf{w})$ are both even. Since $wt_H(\mathbf{v})$ and $N(\mathbf{w})$ are both even, then $(\mathbf{1}^\alpha,\mathbf{0}^\beta)$, $(\mathbf{0}^\alpha,\mathbf{u}^\beta)$  are in $\mathcal{C}$. We are done.\end{proof}
\begin{lemma}\label{lem:CBXself}
Let $\mathcal{C}$ be a linear self-dual code, then the subcode $(\mathcal{C}_b)_X$ is a binary self-dual code.
\end{lemma}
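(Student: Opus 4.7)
The plan is to verify both the dimension condition and the self-orthogonality condition for $(\mathcal{C}_b)_X$ inside $\mathbb{Z}_2^{\alpha}$ with the standard binary inner product.

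First I would handle the dimension. By definition $\kappa = \dim(\mathcal{C}_b)_X$, and by Lemma~\ref{lem:selfdualcanshu} a self-dual $\mathbb{Z}_2R$-code necessarily has $\alpha = 2\kappa$. Hence $\dim(\mathcal{C}_b)_X = \kappa = \alpha/2$, which is exactly the dimension a self-dual binary code of length $\alpha$ must have. This half of the argument is essentially free once Lemma~\ref{lem:selfdualcanshu} is in hand.

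Next I would establish self-orthogonality. Take any two codewords $(\mathbf{v}_1|\mathbf{w}_1),(\mathbf{v}_2|\mathbf{w}_2)\in \mathcal{C}_b$, so by definition each coordinate of $\mathbf{w}_1$ and $\mathbf{w}_2$ lies in $\{0,u\}$. Using the definition \eqref{eq:innerproduct} of the inner product,
\[
\langle(\mathbf{v}_1|\mathbf{w}_1),(\mathbf{v}_2|\mathbf{w}_2)\rangle = u\sum_{i=1}^{\alpha} v_{1,i}v_{2,i} + \sum_{j=1}^{\beta} w_{1,j}w_{2,j}.
\]
Since each product $w_{1,j}w_{2,j}$ is either $0$ or $u^2=0$, the second sum vanishes, leaving $u\langle \mathbf{v}_1,\mathbf{v}_2\rangle$, where the inner product in the coefficient is computed in $\mathbb{Z}_2$. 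Because $\mathcal{C}$ is self-dual this quantity equals $0$ in $R$, and the map $a\mapsto ua$ from $\mathbb{Z}_2$ into $R$ is injective, so $\langle \mathbf{v}_1,\mathbf{v}_2\rangle = 0$ in $\mathbb{Z}_2$. Thus $(\mathcal{C}_b)_X$ is self-orthogonal in $\mathbb{Z}_2^{\alpha}$. Combined with the dimension count, $(\mathcal{C}_b)_X$ is self-dual.

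There is no serious obstacle here; the only place one has to be slightly careful is the step that translates $u\langle\mathbf{v}_1,\mathbf{v}_2\rangle = 0$ in $R$ into $\langle\mathbf{v}_1,\mathbf{v}_2\rangle \equiv 0 \pmod 2$, which uses the structure of $R = \mathbb{Z}_2 + u\mathbb{Z}_2$. Everything else is a direct application of Lemma~\ref{lem:selfdualcanshu} together with the fact that $u^2=0$ annihilates the $R$-part of the inner product on $\mathcal{C}_b$.
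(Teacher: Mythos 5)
Your proposal is correct and follows essentially the same route as the paper: both use Lemma~\ref{lem:selfdualcanshu} to get $\alpha=2\kappa=2\dim(\mathcal{C}_b)_X$, and both observe that the $R$-part of the inner product vanishes on $\mathcal{C}_b$ (since $u^2=0$), forcing the $X$-parts to be orthogonal over $\mathbb{Z}_2$. Your write-up is merely more explicit about the step from $u\langle\mathbf{v}_1,\mathbf{v}_2\rangle=0$ in $R$ to $\langle\mathbf{v}_1,\mathbf{v}_2\rangle=0$ in $\mathbb{Z}_2$, which the paper leaves implicit.
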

\begin{proof} By Lemma~\ref{lem:selfdualcanshu}, we have $\mathcal{C}$ is of type $(2\kappa,\beta;\beta+\kappa-2\delta,\delta;\kappa)$. Note that for any two codewords $(\mathbf{x}|\mathbf{y})$, $(\mathbf{v}|\mathbf{w})\in \mathcal{C}_b$, one has $\langle\mathbf{y},\mathbf{w}\rangle=0$. This implies $(\mathcal{C}_b)_X\subseteq(\mathcal{C}_b)_X^\perp$. Since the dimension of $(\mathcal{C}_b)_X$ is $\kappa$ and the length of $(\mathcal{C}_b)_X$ is $\alpha=2\kappa$, we are done.\end{proof}
\begin{lemma}
Let $\mathcal{C}$ be a linear self-dual code of type $(2\kappa,\beta;\beta+\kappa-2\delta,\delta;\kappa)$. There exists an integer $r,$ $1\leq r\leq\kappa$, such that each codeword in $\mathcal{C}_Y$ appears $2^r$ times in $\mathcal{C}$ and $|\mathcal{C}_Y|\geq2^\beta$.
\end{lemma}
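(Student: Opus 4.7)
The plan is to analyze the projection $\pi:\mathcal{C}\to \mathcal{C}_Y$ that deletes the first $\alpha$ coordinates, and to determine its kernel.

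First I would observe that $\pi$ is a surjective $R$-module homomorphism, so that every codeword of $\mathcal{C}_Y$ has the same number $|K|$ of preimages in $\mathcal{C}$, where $K=\ker\pi$ consists of codewords of the form $(\mathbf{v}\mid\mathbf{0})$ lying in $\mathcal{C}$. The key observation is that such codewords are automatically annihilated by $u\in R$: using the scalar multiplication from \eqref{eq:1.1q}, $u\cdot(\mathbf{v}\mid\mathbf{0})=(\eta(u)\mathbf{v}\mid u\mathbf{0})=(\mathbf{0}\mid\mathbf{0})$. Hence $K$ is a $\mathbb{Z}_2$-vector space, so $|K|=2^r$ for some integer $r\geq 0$.

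Next I would establish $r\leq \kappa$. The map $(\mathbf{v}\mid\mathbf{0})\mapsto\mathbf{v}$ embeds $K$ into $(\mathcal{C}_b)_X$, because $(\mathbf{v}\mid\mathbf{0})\in\mathcal{C}$ has $Y$-part lying trivially in $\{0,u\}^\beta$. Since $\dim_{\mathbb{Z}_2}(\mathcal{C}_b)_X=\kappa$ by definition of $\kappa$, we obtain $|K|\leq 2^\kappa$. For the lower bound $r\geq 1$ I would invoke Lemma~\ref{wandNwareeven}, which gives $(\mathbf{1}^{2\kappa},\mathbf{0}^{\beta})\in\mathcal{C}$; this is a nonzero element of $K$ (under the running hypothesis $\kappa\geq 1$), so $|K|\geq 2$.

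Finally, combining with $|\mathcal{C}|=2^{\beta+\kappa}$ from Lemma~\ref{lem:selfdualcanshu}, the orbit–stabilizer argument gives
\[
|\mathcal{C}_Y|=\frac{|\mathcal{C}|}{|K|}=2^{\beta+\kappa-r}\geq 2^{\beta+\kappa-\kappa}=2^\beta,
\]
which simultaneously yields both conclusions. I do not expect any real obstacle: the argument is essentially a first isomorphism theorem applied to the $Y$-projection, together with the observation that $K$ is a binary subspace sitting inside $(\mathcal{C}_b)_X$. The only subtlety to check carefully is that the scalar action of $u$ does kill $K$, which justifies writing $|K|=2^r$ rather than a mixed power of 2 and 4.
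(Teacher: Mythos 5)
Your proof is correct and follows essentially the same route as the paper: the kernel $K$ of the $Y$-projection is exactly the subcode $\mathcal{C}_0=\{(\mathbf{v}\mid\mathbf{0})\in\mathcal{C}\}$ that the paper uses, with $r=\dim(\mathcal{C}_0)_X\leq\kappa$ via the inclusion into $(\mathcal{C}_b)_X$ and the count $|\mathcal{C}_Y|=|\mathcal{C}|/|\mathcal{C}_0|\geq 2^\beta$. The only difference is that you also justify the lower bound $r\geq 1$ (via $(\mathbf{1}^{\alpha}\mid\mathbf{0}^{\beta})\in\mathcal{C}$ from Lemma~\ref{wandNwareeven}), a step the paper's proof silently omits, so your write-up is if anything slightly more complete.
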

\begin{proof} Denote the subcode $\mathcal{C}_0=\{(\mathbf{v}|\mathbf{0})\in \mathcal{C}\}$. It is easy to see that $(\mathcal{C}_0)_X$ is a linear binary code with dimension $r=dim(\mathcal{C}_0)_X$. Thus, any vector in $\mathcal{C}_Y$ appears $2^r$ times in $\mathcal{C}$. Note that $(\mathcal{C}_0)_X\subseteq (\mathcal{C}_b)_X$, then $r\leq\kappa$. Since $|\mathcal{C}|=2^{\beta+\kappa}=|\mathcal{C}_Y||\mathcal{C}_0|$, then $|\mathcal{C}_Y|\geq2^\beta$.\end{proof}

For convenience, we define  notations in the following    for any two codewords $(\mathbf{v}| \mathbf{w})$, $(\mathbf{x}| \mathbf{y})\in \mathcal{C}$.
 \begin{table}[!h]
  \centering
 % \caption{some notations are fixed throughout this paper}
\vspace*{10pt}
\begin{tabular}{ l l }
  \hline
  % after \\: \hline or \cline{col1-col2} \cline{col3-col4} ...
$N(\mathbf{w})$ & the number of units of $w$ \\
$N_u(\mathbf{w})$ & the number of  $u\in w$  \\
  $N_{1,1}(\mathbf{w},\mathbf{y})$ & $\# \{i \mid w_i=1 ~{\rm or}~ 1+u, y_i=1 ~{\rm or}~ 1+u, 1\leq i\leq \beta\}$\\
  $N_{1,u}(\mathbf{w},\mathbf{y})$ & $\# \{i \mid w_i=1 ~{\rm or}~ 1+u, y_i=u, 1\leq i\leq \beta\}$ \\
 $ N_{u,1}(\mathbf{w},\mathbf{y})$ &$ \# \{i \mid w_i=u, y_i=1 ~{\rm or}~ 1+u, 1\leq i\leq \beta\}$ \\
 $ N_s(\mathbf{w},\mathbf{y})$ &  $\#\{i \mid w_i=y_i=1 ~{\rm or}~ 1+u, 1\leq i\leq \beta\}$ \\
  $N_d(\mathbf{w},\mathbf{y})$ & $\#\{i \mid w_i=1, y_i=1+u  ~{\rm or}~ w_i=1+u, y_i=1, 1\leq i\leq \beta\}$ \\
  \hline
\end{tabular}
\end{table}

%\begin{center}
%\begin{tabular}{ c l }
%  \hline
%  % after \\: \hline or \cline{col1-col2} \cline{col3-col4} ...
%  $N_{1,1}(\mathbf{w},\mathbf{y})$ & $\# \{i \mid w_i=1 ~{\rm or}~ 1+u, y_i=1 ~{\rm or}~ 1+u, 1\leq i\leq \beta\}$\\
%  $N_{1,u}(\mathbf{w},\mathbf{y})$ & $\# \{i \mid w_i=1 ~{\rm or}~ 1+u, y_i=u, 1\leq i\leq \beta\}$ \\
% $ N_{u,1}(\mathbf{w},\mathbf{y})$ &$ \# \{i \mid w_i=u, y_i=1 ~{\rm or}~ 1+u, 1\leq i\leq \beta\}$ \\
% $ N_s(\mathbf{w},\mathbf{y})$ & \# $\{i \mid w_i=y_i=1 ~{\rm or}~ 1+u, 1\leq i\leq \beta\}$ \\
%  $N_d(\mathbf{w},\mathbf{y})$ & $\{i \mid w_i=1, y_i=1+u  ~{\rm or}~ w_i=1+u, y_i=1, 1\leq i\leq \beta\}$ \\
%  \hline
%\end{tabular}
%\end{center}
%$$N_{1,1}(\mathbf{w},\mathbf{y})=\# \{i \mid w_i=1 ~{\rm or}~ 1+u, y_i=1 ~{\rm or}~ 1+u, 1\leq i\leq \beta\},$$
%$$N_{1,u}(\mathbf{w},\mathbf{y})=\# \{i \mid w_i=1 ~{\rm or}~ 1+u, y_i=u, 1\leq i\leq \beta\},$$
%$$N_{u,1}(\mathbf{w},\mathbf{y})=\# \{i \mid w_i=u, y_i=1 ~{\rm or}~ 1+u, 1\leq i\leq \beta\},$$
%$$N_s(\mathbf{w},\mathbf{y})=\# \{i \mid w_i=y_i=1 ~{\rm or}~ 1+u, 1\leq i\leq \beta\}, $$
%$$N_d(\mathbf{w},\mathbf{y})=\# \{i \mid w_i=1, y_i=1+u  ~{\rm or}~ w_i=1+u, y_i=1, 1\leq i\leq \beta\}.$$
%It is easy to check that $N_{1,1}(\mathbf{w},\mathbf{y})=N_s(\mathbf{w},\mathbf{y})+N_d(\mathbf{w},\mathbf{y})$.
\begin{lemma}
Let $\mathcal{C}$ be a $\mathbb{Z}_2R$ linear self-dual code, and $(\mathbf{v}|\mathbf{w})$, $(\mathbf{x}| \mathbf{y})$ are two codewords in $\mathcal{C}$.  Then we have $N_s(\mathbf{w},\mathbf{y}) \equiv N_d(\mathbf{w},\mathbf{y}) ~({\rm mod}~ 2)$ and $N_{1,1}(\mathbf{w},\mathbf{y})$ is even.
\end{lemma}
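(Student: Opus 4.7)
The plan is to exploit the self-duality condition applied to the two codewords themselves, then read off information from the constant (non-$u$) component of the inner product in $R$. Both assertions will fall out of a single identity once the contributions of each position to $\sum_{j=\alpha+1}^{\alpha+\beta} w_j y_j$ are tabulated modulo $2$.

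First I would observe that since $\mathcal{C}$ is self-dual, $\mathcal{C}\subseteq \mathcal{C}^\perp$, so
\[
\langle (\mathbf{v}|\mathbf{w}),(\mathbf{x}|\mathbf{y})\rangle = u\sum_{i=1}^{\alpha} v_i x_i + \sum_{j=1}^{\beta} w_j y_j = 0 \in R.
\]
Writing elements of $R$ as $a+bu$ with $a,b\in\mathbb{Z}_2$, this identity forces the constant part and the $u$-part of the left-hand side to vanish separately modulo $2$.

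Next I would do the case analysis for the product $w_j y_j$ at each coordinate $j$. A position contributes $1$ to the constant part of $w_jy_j$ precisely when $w_j y_j \in \{1,1+u\}$, which happens in two disjoint situations: $w_j=y_j\in\{1,1+u\}$ (contributes $w_j^2=1$ in the $1$-case, or $(1+u)^2=1$ in the $1+u$-case), giving the count $N_s(\mathbf{w},\mathbf{y})$; and $\{w_j,y_j\}=\{1,1+u\}$ (contributes $1+u$), giving the count $N_d(\mathbf{w},\mathbf{y})$. All other position types (where at least one of $w_j,y_j$ lies in $\{0,u\}$) yield products in $\{0,u\}$, contributing $0$ to the constant part. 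Note the $u$-term $u\sum v_ix_i$ only affects the $u$-part of the total and is irrelevant here.

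Reading off the constant part of the vanishing inner product gives
\[
N_s(\mathbf{w},\mathbf{y}) + N_d(\mathbf{w},\mathbf{y}) \equiv 0 \pmod{2},
\]
which is exactly the first assertion $N_s \equiv N_d \pmod 2$. Since by definition $N_{1,1}(\mathbf{w},\mathbf{y}) = N_s(\mathbf{w},\mathbf{y}) + N_d(\mathbf{w},\mathbf{y})$ (the positions where both entries are units split according to whether the units agree or differ), the second assertion $N_{1,1}(\mathbf{w},\mathbf{y})$ even follows from the same congruence. There is no real obstacle here beyond keeping the case analysis of the four unit/nonunit combinations straight; the $u$-part of the inner product would give an auxiliary congruence involving $N_d, N_{1,u}, N_{u,1}$ and the $\mathbb{Z}_2$-inner product of $\mathbf{v}$ and $\mathbf{x}$, but that identity is not needed for this lemma.
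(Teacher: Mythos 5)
Your proposal is correct and follows essentially the same route as the paper: expand the inner product $\langle(\mathbf{v}|\mathbf{w}),(\mathbf{x}|\mathbf{y})\rangle=0$ coordinate by coordinate, observe that the constant (non-$u$) part of $w_jy_j$ is $1$ exactly when both entries are units, and conclude $N_s+N_d=N_{1,1}\equiv 0\pmod 2$. The paper writes out the full decomposition including the $u$-part (which, as you note, is not needed for this lemma), but the argument is the same.
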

\begin{proof} Let $(\mathbf{v}| \mathbf{w})=(v_1,\cdots,v_\alpha | w_1,\cdots,w_\beta)$, $(\mathbf{x}| \mathbf{y})=(x_1,\cdots,x_\alpha|y_1,\cdots,y_\beta)$.  In the following,  we consider the inner product of $\langle(\mathbf{v}| \mathbf{w}),(\mathbf{x}| \mathbf{y})\rangle$.  By \eqref{eq:innerproduct}, one has
\begin{eqnarray*}
% \nonumber to remove numbering (before each equation)
&&\langle(\mathbf{v}| \mathbf{w}),(\mathbf{x}| \mathbf{y})\rangle\\
   &=&u\sum_{i=1}^\alpha v_ix_i+uN_{1,u}(\mathbf{w},\mathbf{y})+uN_{u,1}(\mathbf{w},\mathbf{y})+N_s(\mathbf{w},\mathbf{y})+N_d(\mathbf{w},\mathbf{y})+uN_d(\mathbf{w},\mathbf{y}) \\
   &=& u\left(\sum_{i=1}^\alpha v_ix_i+N_{1,u}(\mathbf{w},\mathbf{y})+N_{u,1}(\mathbf{w},\mathbf{y})+N_d(\mathbf{w},\mathbf{y})\right)+N_s(\mathbf{w},\mathbf{y})+N_d(\mathbf{w},\mathbf{y})\\
&=& 0\in R.
\end{eqnarray*}
%\[<(v\mid (1+u)w),(x\mid y)>=\sum_{i=1}^\alpha v_ix_i+uN_1+uN_2+N_s+N_d+uN_s=0\in\mathbb{Z}_2[u].\]
Then,  we have $N_s(\mathbf{w},\mathbf{y}) \equiv N_d(\mathbf{w},\mathbf{y})~({\rm mod}~ 2)$, which implies $N_{1,1}(\mathbf{w},\mathbf{y})$ is even.\end{proof}

\begin{define}
Let $\mathcal{C}$ be a $\mathbb{Z}_2R$ linear  code. If $\mathcal{C}=\mathcal{C}_X\times \mathcal{C}_Y$, then $\mathcal{C}$ is called separable.
\end{define}

By Theorem~\ref{th:standard form}, if $\mathcal{C}$ is a separable linear  code, we have that the generator matrix of $\mathcal{C}$ in standard form as follows
\[G_s=\left(\begin{array}{cc|ccc}
        I_\kappa & A & 0 &0 & 0\\
        0 & 0  & uB  &  uI_{\gamma-\kappa} & 0 \\
        0 & 0  & C &D & I_\delta
      \end{array}\right).
\]

The following result is  some  equivalent conditions of separable $\mathbb{Z}_2R$ linear self-dual codes.
\begin{theorem}\label{equivalentseparable}
Let $\mathcal{C}$ be a $\mathbb{Z}_2R$  self-dual code of type $(2\kappa,\beta;\beta+\kappa-2\delta,\delta;\kappa)$. Then we have the following statements are equivalent:
\begin{enumerate}
  \item $\mathcal{C}$ is separable.
  \item $\mathcal{C}_X$ is binary self-orthogonal.
  \item $\mathcal{C}_X$ is binary self-dual.
  \item $\mid\mathcal{C}_X\mid=2^\kappa$.
  \item $\mathcal{C}_Y$  is a  self-orthogonal code.
  \item $\mathcal{C}_Y$  is a  self-dual code.
  \item $\mid\mathcal{C}_Y\mid=2^\beta$.
\end{enumerate}
\end{theorem}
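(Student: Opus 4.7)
The plan is to derive a single clean size relation between $|\mathcal{C}_X|$ and $|\mathcal{C}_Y|$ that holds for every $\mathbb{Z}_2R$ linear self-dual code, and then to route the seven statements through three linked groups: $(2)\Leftrightarrow(3)\Leftrightarrow(4)$ on the $\mathcal{C}_X$ side, $(5)\Leftrightarrow(6)\Leftrightarrow(7)$ on the $\mathcal{C}_Y$ side, with $(1)\Leftrightarrow(4)\Leftrightarrow(7)$ serving as the bridge.

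The preparatory step is to identify the kernel of the projection $\pi_X\colon \mathcal{C}\to \mathcal{C}_X$ with the $R$-dual of $\mathcal{C}_Y$. Writing
\[K_X=\{\,\mathbf{y}\in R^\beta \mid (\mathbf{0},\mathbf{y})\in \mathcal{C}\,\},\]
the inner product formula \eqref{eq:innerproduct} shows that for such $\mathbf{y}$ the $u$-summand vanishes, so $(\mathbf{0},\mathbf{y})\in \mathcal{C}=\mathcal{C}^\perp$ is equivalent to $\sum_j y_jw_j=0$ for every $\mathbf{w}\in \mathcal{C}_Y$, that is, $\mathbf{y}\in \mathcal{C}_Y^\perp$ inside $R^\beta$. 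Combining $|\mathcal{C}|=|\mathcal{C}_X|\cdot|K_X|=2^{\beta+\kappa}$ with the standard pairing identity $|\mathcal{C}_Y|\cdot|\mathcal{C}_Y^\perp|=4^\beta$ for codes over $R$ yields the key relation
\[|\mathcal{C}_Y|=2^{\beta-\kappa}|\mathcal{C}_X|.\]
A symmetric argument with $K_Y=\{\,\mathbf{v}\in \mathbb{Z}_2^{2\kappa}\mid(\mathbf{v},\mathbf{0})\in \mathcal{C}\,\}=\mathcal{C}_X^\perp$ reproduces the same relation.

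With this identity, the core triangle $(1)\Leftrightarrow(4)\Leftrightarrow(7)$ is immediate by counting. Separability is equivalent to $|\mathcal{C}|=|\mathcal{C}_X|\cdot|\mathcal{C}_Y|$; substituting the identity into $|\mathcal{C}|=2^{\beta+\kappa}$ gives $|\mathcal{C}_X|^2=2^{2\kappa}$, so $|\mathcal{C}_X|=2^\kappa$; and plugging back, $|\mathcal{C}_Y|=2^\beta$. Each step is reversible.

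For the $\mathcal{C}_X$ side I would appeal to Lemma~\ref{lem:CBXself}: $(\mathcal{C}_b)_X\subseteq \mathcal{C}_X$ is a binary self-dual code of dimension $\kappa$, so $|\mathcal{C}_X|\geq 2^\kappa$. Any binary self-orthogonal code in $\mathbb{Z}_2^{2\kappa}$ has size at most $2^\kappa$, giving $(2)\Rightarrow(4)$; once $|\mathcal{C}_X|=2^\kappa$, the forced equality $\mathcal{C}_X=(\mathcal{C}_b)_X$ yields $(4)\Rightarrow(3)$; and $(3)\Rightarrow(2)$ is trivial. The $\mathcal{C}_Y$ side runs in parallel: the earlier lemma gives $|\mathcal{C}_Y|\geq 2^\beta$; any self-orthogonal code in $R^\beta$ has size at most $2^\beta$ by the pairing identity, giving $(5)\Rightarrow(7)$; and $(7)\Rightarrow(6)$ follows because $(7)$ forces separability, so $(\mathbf{0},\mathbf{y})\in \mathcal{C}$ for every $\mathbf{y}\in \mathcal{C}_Y$, whence $\mathcal{C}_Y\subseteq \mathcal{C}_Y^\perp$ and equality follows by size. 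The main obstacle I anticipate is the careful verification of the identification $K_X=\mathcal{C}_Y^\perp$, since one must trace both directions of inclusion through the mixed inner product \eqref{eq:innerproduct} and invoke the pairing law over $R$; everything else reduces to dimension bookkeeping.
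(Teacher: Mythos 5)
Your proof is correct, and it is in fact more than the paper provides: the paper's ``proof'' of Theorem~\ref{equivalentseparable} is a one-line deferral to the analogous result for $\mathbb{Z}_2\mathbb{Z}_4$-additive codes (\citep[Theorem~3]{Borgesselfdual}), whereas you supply a self-contained argument in the same counting style as that reference. The key identification $K_X=\mathcal{C}_Y^\perp$ goes through exactly as you anticipate: for a codeword supported on the $Y$-coordinates the $u$-weighted $X$-summand of \eqref{eq:innerproduct} vanishes identically, so membership in $\mathcal{C}^\perp=\mathcal{C}$ reduces to the standard $R$-inner product against $\mathcal{C}_Y$; symmetrically $K_Y=\mathcal{C}_X^\perp$ because $u\cdot a=0$ in $R$ forces $a=0$ in $\mathbb{Z}_2$. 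The only ingredient you use that deserves an explicit word in a final write-up is the pairing identity $|\mathcal{D}|\cdot|\mathcal{D}^\perp|=|R|^\beta$ for $R$-submodules $\mathcal{D}\subseteq R^\beta$; this holds because $R=\mathbb{Z}_2+u\mathbb{Z}_2$ is a finite chain (hence Frobenius) ring, and the paper itself silently relies on the same fact elsewhere (e.g.\ in the proof of Theorem~\ref{th:relationZ2Z2utoZ2Z4}). With that noted, the relation $|\mathcal{C}_Y|=2^{\beta-\kappa}|\mathcal{C}_X|$, the reversible chain $(1)\Leftrightarrow(4)\Leftrightarrow(7)$, and the two side-cycles via Lemma~\ref{lem:CBXself} and Lemma~2.5 close all seven equivalences; I see no gap.
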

\begin{proof} By the similar proof in \citep[Theorem~$3$]{Borgesselfdual}, we are done.\end{proof}

Note that for any vectors $\mathbf{x}$, $\mathbf{y}\in \mathcal{C}_X$, we have
 \[wt_H(\mathbf{x}+\mathbf{y})=wt_H(\mathbf{x})+wt_H(\mathbf{y})-2wt_H(\mathbf{x}*\mathbf{y}),\]where $\mathbf{x}*\mathbf{y}$ is the componentwise product of $\mathbf{x}$ and $\mathbf{y}$. If $wt_H(\mathbf{x})$, $wt_H(\mathbf{y})$, $wt_H(\mathbf{x}+\mathbf{y})$ are all doubly-even, then $wt_H(\mathbf{x}*\mathbf{y})\equiv 0  ~({\rm mod}~2)$, i.e. $\mathbf{x}$ and $\mathbf{y}$ are orthogonal. Therefore,  by Theorem~\ref{equivalentseparable}, we have below result.
\begin{cor}
Let $\mathcal{C}$ be a $\mathbb{Z}_2R$ linear self-dual code. If $\mathcal{C}_X$ has all weights doubly-even, then $\mathcal{C}$ is separable.
\end{cor}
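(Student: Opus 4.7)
The plan is to reduce the claim to the equivalent characterizations of separability established in Theorem~\ref{equivalentseparable}. Specifically, that theorem tells us that a self-dual code $\mathcal{C}$ is separable as soon as $\mathcal{C}_X$ is binary self-orthogonal. So the whole task boils down to showing that the hypothesis ``all weights in $\mathcal{C}_X$ are doubly-even'' forces $\mathcal{C}_X \subseteq \mathcal{C}_X^\perp$ in the ordinary binary inner product.

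For this, I would take arbitrary $\mathbf{x}, \mathbf{y} \in \mathcal{C}_X$ and apply the identity recorded just before the corollary,
\[
wt_H(\mathbf{x}+\mathbf{y}) \;=\; wt_H(\mathbf{x}) + wt_H(\mathbf{y}) - 2\,wt_H(\mathbf{x} \ast \mathbf{y}).
\]
Since $\mathcal{C}_X$ is closed under addition (being the punctured image of the $\mathbb{Z}_2 R$-linear code $\mathcal{C}$ on its binary coordinates), the vector $\mathbf{x}+\mathbf{y}$ also lies in $\mathcal{C}_X$. By hypothesis, each of $wt_H(\mathbf{x})$, $wt_H(\mathbf{y})$ and $wt_H(\mathbf{x}+\mathbf{y})$ is divisible by $4$. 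Reducing the identity modulo $4$ gives $2\,wt_H(\mathbf{x}\ast \mathbf{y}) \equiv 0 \pmod 4$, hence $wt_H(\mathbf{x}\ast\mathbf{y}) \equiv 0 \pmod 2$. But $wt_H(\mathbf{x}\ast\mathbf{y}) \bmod 2$ is exactly the binary inner product $\mathbf{x}\cdot\mathbf{y}$, so $\mathbf{x}\cdot\mathbf{y}=0$. Thus $\mathcal{C}_X$ is binary self-orthogonal, and Theorem~\ref{equivalentseparable} delivers separability.

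There is essentially no main obstacle here: the divisibility argument is the entire content of the proof, and the paragraph preceding the corollary has already spelled out the arithmetic identity needed. The only point worth double-checking is that $\mathcal{C}_X$ is indeed closed under binary addition, which is immediate from the fact that $\mathcal{C}$ is an $R$-submodule and projection onto the first $\alpha$ coordinates is $\mathbb{Z}_2$-linear. Once that is noted, the conclusion follows by direct appeal to the equivalent condition (2) $\Rightarrow$ (1) in Theorem~\ref{equivalentseparable}.
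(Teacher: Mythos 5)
Your proof is correct and is essentially the same argument the paper gives (in the paragraph immediately preceding the corollary): apply the weight identity $wt_H(\mathbf{x}+\mathbf{y})=wt_H(\mathbf{x})+wt_H(\mathbf{y})-2wt_H(\mathbf{x}*\mathbf{y})$, deduce $wt_H(\mathbf{x}*\mathbf{y})\equiv 0 \pmod 2$ from the doubly-even hypothesis, conclude $\mathcal{C}_X$ is self-orthogonal, and invoke Theorem~\ref{equivalentseparable}. Your explicit remark that $\mathcal{C}_X$ is closed under addition is a small but welcome addition of rigor.
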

\begin{cor}\label{cor:delta=0}
Let $\mathcal{C}$ be a $\mathbb{Z}_2R$ linear self-dual code of type $(\alpha,\beta;\gamma,\delta;\kappa)$. If $\delta=0$, then $\mathcal{C}$ is separable.
\end{cor}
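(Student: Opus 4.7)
The plan is to apply Theorem~\ref{equivalentseparable} by verifying that $\mathcal{C}_Y$ is self-orthogonal, which is the easiest of the seven equivalent conditions to check directly when $\delta=0$.

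First I would invoke Theorem~\ref{th:standard form} to write down a standard form generator matrix for $\mathcal{C}$. Setting $\delta=0$ kills the entire third block row, so the only surviving generators are
\[
\begin{pmatrix}
I_\kappa & A_1 & uT & 0 \\
0 & 0 & uD & uI_{\gamma-\kappa}
\end{pmatrix},
\]
and every entry appearing in the $R$-coordinates (the last $\beta$ columns) lies in the ideal $uR=\{0,u\}$. Consequently every codeword $(\mathbf{v}\mid\mathbf{w})\in\mathcal{C}$ satisfies $\mathbf{w}\in\{0,u\}^\beta$, so in particular every element of $\mathcal{C}_Y$ has all its entries in $uR$.

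Next I would compute the inner product of two arbitrary vectors $\mathbf{w},\mathbf{y}\in\mathcal{C}_Y$ using \eqref{eq:innerproduct}. Writing $\mathbf{w}=u\mathbf{a}$ and $\mathbf{y}=u\mathbf{b}$ with $\mathbf{a},\mathbf{b}\in\mathbb{Z}_2^\beta$, the sum $\sum_{j} w_j y_j$ equals $u^2\sum_j a_j b_j=0$ because $u^2=0$ in $R$. Thus $\mathcal{C}_Y\subseteq \mathcal{C}_Y^{\perp}$, i.e., $\mathcal{C}_Y$ is self-orthogonal (as a code over $R$ with respect to the natural $R$-valued bilinear form on $R^\beta$).

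Finally, by the equivalence of conditions (1) and (5) in Theorem~\ref{equivalentseparable}, self-orthogonality of $\mathcal{C}_Y$ forces $\mathcal{C}$ to be separable, completing the proof. There is essentially no obstacle here; the only thing to keep straight is that ``all $R$-coordinates land in $uR$'' is exactly what the $\delta=0$ hypothesis buys us from the standard form, and that the vanishing $u^2=0$ does the rest.
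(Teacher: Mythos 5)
Your proof is correct and follows essentially the same route as the paper: both arguments observe that $\delta=0$ forces every codeword's $R$-part to lie in $\{0,u\}^\beta$, deduce that $\mathcal{C}_Y$ is self-orthogonal (since $u^2=0$), and then invoke the equivalence of conditions (1) and (5) in Theorem~\ref{equivalentseparable}. You simply spell out the self-orthogonality computation in more detail than the paper does.
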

\begin{proof} If $\delta=0$, then we have for any $(\mathbf{v}|\mathbf{w})\in \mathcal{C}$, $\mathbf{w}$ does not contain unit. Thus, $ \mathcal{C}_Y$ is self-orthogonal, by Theorem~\ref{equivalentseparable}, we get $\mathcal{C}$ is separable.\end{proof}

The above corollary implies the following result.
\begin{cor}
Let $\mathcal{C}$ be a $\mathbb{Z}_2R$ linear self-dual code of type $(\alpha,\beta;\gamma,\delta;\kappa)$. If   $\mathcal{C}$ is non-separable, then $\delta\geq1$.
\end{cor}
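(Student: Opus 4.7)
The plan is simply to observe that this corollary is the contrapositive of the preceding Corollary~\ref{cor:delta=0}. Since a $\mathbb{Z}_2R$ linear self-dual code of type $(\alpha,\beta;\gamma,\delta;\kappa)$ with $\delta=0$ must be separable, any non-separable self-dual code of this type must have $\delta\neq 0$, hence $\delta\geq 1$ (as $\delta$ is a nonnegative integer by construction of the standard form in Theorem~\ref{th:standard form}).

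Concretely, I would first suppose for contradiction that $\mathcal{C}$ is non-separable and $\delta<1$. Since $\delta$ counts the number of rows of the third block in the standard form generator matrix, $\delta\geq 0$, so the assumption forces $\delta=0$. Then I would invoke Corollary~\ref{cor:delta=0} to conclude that $\mathcal{C}$ is separable, which contradicts the hypothesis. Hence $\delta\geq 1$.

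There is no real obstacle here; the statement is logically immediate from Corollary~\ref{cor:delta=0}, and the only thing to verify is that $\delta$ is an integer $\geq 0$, which follows from the definition of the type. So the entire proof is a one-line contrapositive argument, with no calculation needed.
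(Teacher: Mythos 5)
Your proof is correct and matches the paper exactly: the paper states this corollary immediately after Corollary~\ref{cor:delta=0} with the remark ``The above corollary implies the following result,'' i.e.\ it is obtained precisely as the contrapositive you describe, using only that $\delta$ is a nonnegative integer.
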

\begin{define}
If a $\mathbb{Z}_2R$ linear self-dual code has odd weights, then it is said to be Type $0$. If
it has only even weights, then the code is said to be Type I. If all the codewords have doubly-even weight then it is said to be Type II.
\end{define}

By Lemma~\ref{wandNwareeven}, we get the following  result.
\begin{theorem}
There donot exist $\mathbb{Z}_2R$ linear self-dual codes of   Type $0$.
\end{theorem}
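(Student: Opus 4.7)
The plan is to show that every codeword of a $\mathbb{Z}_2R$ linear self-dual code has even Lee weight, contradicting the definition of Type $0$ (which requires the existence of at least one odd-weight codeword).

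First, I would recall the formula for the Lee weight of a single coordinate in $R$: from $\phi(0)=(0,0)$, $\phi(1)=(0,1)$, $\phi(u)=(1,1)$, $\phi(1+u)=(1,0)$, one reads off $wt_L(0)=0$, $wt_L(1)=wt_L(1+u)=1$ and $wt_L(u)=2$. Hence for any $\mathbf{w}\in R^\beta$,
\[
wt_L(\mathbf{w}) = N(\mathbf{w}) + 2N_u(\mathbf{w}),
\]
where $N(\mathbf{w})$ counts unit entries ($1$ or $1+u$) and $N_u(\mathbf{w})$ counts entries equal to $u$. Combined with $wt_L(\mathbf{v}|\mathbf{w}) = wt_H(\mathbf{v}) + wt_L(\mathbf{w})$ (already stated in the excerpt), this gives
\[
wt_L(\mathbf{v}|\mathbf{w}) = wt_H(\mathbf{v}) + N(\mathbf{w}) + 2N_u(\mathbf{w}).
\]

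Next, I would invoke Lemma~\ref{wandNwareeven}: for any codeword $(\mathbf{v}|\mathbf{w})$ of a $\mathbb{Z}_2R$ linear self-dual code $\mathcal{C}$, both $wt_H(\mathbf{v})$ and $N(\mathbf{w})$ are even. Since $2N_u(\mathbf{w})$ is automatically even, each of the three summands in the displayed expression is even, and therefore $wt_L(\mathbf{v}|\mathbf{w})$ is even for every codeword of $\mathcal{C}$.

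Finally, the definition classifies a self-dual code as Type $0$ precisely when it contains a codeword of odd Lee weight. Since we have just shown no self-dual code contains such a codeword, no $\mathbb{Z}_2R$ linear self-dual code is Type $0$. There is essentially no obstacle here: the real content is already inside Lemma~\ref{wandNwareeven}, and the remaining step is simply the arithmetic check that the Lee weight decomposes into pieces each of which is forced to be even.
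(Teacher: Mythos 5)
Your proof is correct and follows the same route as the paper: the paper derives the theorem directly from Lemma~\ref{wandNwareeven}, and your argument simply makes explicit the arithmetic the paper leaves implicit, namely that $wt_L(\mathbf{v}|\mathbf{w})=wt_H(\mathbf{v})+N(\mathbf{w})+2N_u(\mathbf{w})$ is a sum of three even terms.
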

The followings examples are of Type I and Type II.
\begin{example}(Type~I, separable).\label{exa1}
Let $\mathcal{C}$ be a  code of type  $(2,1;2,0;1)$ over $\mathbb{Z}_{2}R$, whose generator matrix is
$$\left(
   \begin{array}{cc|c}
     1 & 1 & 0 \\
     0 & 0 & u \\
   \end{array}
 \right)
$$
It is easy to see $\mathcal{C}=\mathcal{D}\times \mathcal{E}$, where $\mathcal{D}=\langle(11)\rangle$, $\mathcal{E}=\langle(u)\rangle$. Thus $\mathcal{C}$ is separable. Moreover, we have the weight enumerator of this code is
$$W(x,y)=x^{4}+2x^{2}y^{2}+y^{4}.$$
Hence, $\mathcal{C}$ is a self-dual code and  $\mathcal{C}$ is of Type I.
\end{example}
\begin{example}(Type~I, non-separable).
Let $\mathcal{C}$ be generated by the following matrix
$$\left(
   \begin{array}{cccc|ccc}
     1 & 0 & 1 & 0 & 0 & 0 & u \\
     0 & 1 & 0 & 1 & 0 & 0 & u \\
          0 & 0 & 0 & 0 & 0 & u & 0 \\
0 & 0 & 1 & 1 & 1 & 0 & 1+u \\
   \end{array}
 \right)
$$
Clearly, $\mathcal{C}$ is a self-dual code of type $(4,3;3,1;2)$.
Note that $(0~1~0~1)$, $(0~0~1~1)\in \mathcal{C}_{X},$ we have $ \langle(0~1~0~1), (0~0~1~1)\rangle=1$. Thus $\mathcal{C}_{X}$ is not self-orthogonal. By Theorem~\ref{equivalentseparable}, $\mathcal{C} $ is non-separable. Moreover, the weight enumerator of $\mathcal{C}$ is
$$W(x,y)=x^{10}+8x^{8}y^{2}+14x^{4}y^{6}+8x^{2}y^{8}+y^{10}.$$
Therefore, $\mathcal{C} $ is of Type~I code.
\end{example}
\begin{example}(Type~II, separable).
Let $\mathcal{C}$ be generated by
$$\mathcal{G}_{X}=\left(
                   \begin{array}{cccccccc}
                     1 & 0 & 0 & 0 & 0 & 1 & 1 & 1 \\
                     0 & 1 & 0 & 0 & 1 & 0 & 1 & 1 \\
                     0 & 0 & 1 & 0 & 1 & 1 & 0 & 1 \\
                     0 & 0 & 0 & 1 & 1 & 1 & 1 & 0 \\
                   \end{array}
                 \right).
$$
Let $\mathcal{D}$ be generated by
$$\mathcal{G}_{Y}=\left(
   \begin{array}{cccc}
     u & u & 0 & 0 \\
     u & 0 & u & 0 \\
     1 & 1 & 1 & 1 \\
   \end{array}
 \right).
$$
It is easy to check $\mathcal{C}$ is a self-dual code over $\mathbb{Z}_2$  and $\mathcal{D}$ is a self-dual code over $R$. Moreover, the weight of each codeword of $\mathcal{C}$ and $\mathcal{D}$ is doubly-even, respectively. Consider the following matrix
$$\left(
                                                                           \begin{array}{c|c}
                                                                             \mathcal{G}_{X} & \mathbf{0} \\
                                                                             \mathbf{0} & \mathcal{G}_{Y} \\
                                                                           \end{array}
                                                                         \right).
$$ It generates a self-dual code $\mathcal{C}\times\mathcal{D}$ over $\mathbb{Z}_{2}R$.  Since  $\mathcal{C}$ and $\mathcal{D}$ are doubly-even codes, then $\mathcal{C}\times\mathcal{D}$ is of Type II.
\end{example}
\begin{example}\label{ex:2.7}(Type~II, non-separable).
Consider the following matrix
$$\left(
   \begin{array}{cccc|cc}
     1 & 0 & 1 & 0 & 0 & u \\
     0 & 1 & 0 & 1 & 0 & u \\
     0 & 0 & 1 & 1 & 1 & 1+u \\
   \end{array}
 \right),
$$
which generates a linear code $\mathcal{C} $ of type $(4,2;2,1,2)$ over $\mathbb{Z}_2R$. Note that $\mathcal{C} $ is a self-orthogonal code and $|\mathcal{C}|=2^4$, then $\mathcal{C} $ is self-dual. The weight enumerator of this code is
$$W(x,y)=x^{8}+14x^{4}y^{4}+y^{8}.$$
It is easy to see $\mathcal{C}_Y$ is not self-orthogonal. By Theorem~\ref{equivalentseparable}, we have $\mathcal{C} $ is non-separable. To sum up, $\mathcal{C} $ is of Type~II code.
\end{example}

From above examples, we  obtain the minimal  bounds of $\alpha,~ \beta$ for different types.
\begin{theorem}\label{lem:alpha4beta2}
Let $\mathcal{C}$ be a $\mathbb{Z}_2R$ self-dual code of type $(\alpha,\beta;\gamma,\delta;\kappa)$ with $\alpha\cdot\beta>0$.
\begin{itemize}
  \item If $\mathcal{C}$ is Type I and separable, then $\alpha\geq2$, $\beta\geq1$.
  \item If $\mathcal{C}$ is Type I and non-separable, then $\alpha\geq4$, $\beta\geq2$.
  \item If $\mathcal{C}$ is Type II, then $\alpha\geq4$, $\beta\geq2$.
\end{itemize}
\end{theorem}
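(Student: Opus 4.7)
The plan is to handle the three cases separately. Two of them follow quickly from Lemma~\ref{wandNwareeven}, while the Type~I non-separable case requires a more careful dissection of the standard form matrix.

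For the Type~I separable bound, Lemma~\ref{lem:selfdualcanshu} alone suffices: since $\alpha=2\kappa$ is even and $\alpha\cdot\beta>0$, one has $\alpha\geq 2$ and $\beta\geq 1$, and Example~\ref{exa1} realizes these minima. For Type~II, Lemma~\ref{wandNwareeven} supplies $(\mathbf{1}^\alpha,\mathbf{0}^\beta)\in\mathcal{C}$ and $(\mathbf{0}^\alpha,\mathbf{u}^\beta)\in\mathcal{C}$. Their Lee weights are $\alpha$ and $2\beta$ respectively (using $wt_L(u)=wt_H(\phi(u))=2$), and both must be divisible by $4$ by definition of Type~II, which forces $\alpha\geq 4$ and $\beta\geq 2$.

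The main work is the Type~I non-separable case. By Corollary~\ref{cor:delta=0} we have $\delta\geq 1$, and the structural constraint $\gamma\geq\kappa$ inherent to Theorem~\ref{th:standard form}, combined with $\gamma=\beta+\kappa-2\delta$ from Lemma~\ref{lem:selfdualcanshu}, yields $\beta\geq 2\delta\geq 2$. For $\alpha\geq 4$ I argue by contradiction: suppose $\kappa=1$, so $\alpha=2$ and $A_1$ as well as each row of $S$ consist of a single $\mathbb{Z}_2$-entry. Applying the even-Hamming-$X$-weight constraint of Lemma~\ref{wandNwareeven} row by row to the standard form forces $A_1=1$ in the first block and $S=0$ in the third block (the middle block is automatic). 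Then combining the first row $(1,1\mid uT,0,0)\in\mathcal{C}$ with $(\mathbf{1}^2,\mathbf{0}^\beta)\in\mathcal{C}$ (also from Lemma~\ref{wandNwareeven}) produces $(0,0\mid uT,0,0)\in\mathcal{C}$. After this split, every row of the standard form generator decomposes as a sum of a pure-$X$ codeword and a pure-$Y$ codeword already lying in $\mathcal{C}$, whence $\mathcal{C}=\mathcal{C}_X\times\mathcal{C}_Y$ by Theorem~\ref{equivalentseparable}. This contradicts the non-separability assumption, so $\kappa\geq 2$ and $\alpha\geq 4$.

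The hardest step is this final separability argument. It succeeds precisely because $\alpha-\kappa=1$ when $\kappa=1$ reduces $A_1$ and $S$ to single binary entries whose parities are completely pinned down by Lemma~\ref{wandNwareeven}. For $\kappa\geq 2$ the same parity constraints leave genuine freedom to couple the $X$- and $Y$-blocks — consistent with the non-separable Type~I examples at $\kappa=2$ exhibited earlier in the paper — which is why the argument cannot be pushed further and $\alpha=4$ is the sharp lower bound.
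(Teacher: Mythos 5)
Your proof is correct, and for the most part it tracks the paper's own argument: the Type~II bounds come from the doubly-even Lee weights of $(\mathbf{1}^\alpha\mid\mathbf{0}^\beta)$ and $(\mathbf{0}^\alpha\mid\mathbf{u}^\beta)$ exactly as in the paper, and your $\kappa=1$ contradiction in the non-separable Type~I case is the paper's argument (namely that even $X$-weights from Lemma~\ref{wandNwareeven} force $\mathcal{C}_X=\{00,11\}$, a binary self-dual code, so $\mathcal{C}$ is separable by Theorem~\ref{equivalentseparable}) rewritten at the level of the standard-form matrix; pinning down $A_1=1$ and $S=0$ is a correct, if more laborious, way of saying the same thing, and the conclusion that the generators split into pure-$X$ and pure-$Y$ codewords gives separability directly from the definition rather than from Theorem~\ref{equivalentseparable} as you cite. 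The one place you genuinely diverge is the bound $\beta\geq 2$ for non-separable Type~I codes: the paper merely points to Example~\ref{ex:2.7}, which shows $\beta=2$ is attained but does not rule out $\beta=1$, whereas your chain $\gamma\geq\kappa$, $\gamma=\beta+\kappa-2\delta$ (Lemma~\ref{lem:selfdualcanshu}) and $\delta\geq 1$ (Corollary~\ref{cor:delta=0}) yields $\beta\geq 2\delta\geq 2$ and is an actual proof of the lower bound --- a real improvement on the paper. (An equally quick alternative: if $\beta=1$ then $N(\mathbf{w})\in\{0,1\}$ must be even for every codeword by Lemma~\ref{wandNwareeven}, so $\mathcal{C}_Y\subseteq\{0,u\}$ is self-orthogonal and $\mathcal{C}$ is separable by Theorem~\ref{equivalentseparable}.) Your Type~I separable case, read off from $\alpha=2\kappa>0$ being even, is also fine and slightly more elementary than the paper's appeal to the self-duality of $\mathcal{C}_X$ and $\mathcal{C}_Y$.
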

\begin{proof} If $\mathcal{C}$ is Type I and separable, then $\mathcal{C}_X$ is binary self-dual and $\mathcal{C}_X$ is  self-dual over $R$. Thus $\alpha\geq2$, $\beta\geq1$.

If $\mathcal{C}$ is Type I and non-separable.  By Lemma~\ref{wandNwareeven}, we have $\mathcal{C}_x$ are even weight. If $\alpha=2$, then $\mathcal{C}_x$ is  binary self-dual. By Theorem~\ref{equivalentseparable}, it is a contradiction. Thus $\alpha\geq4$. From Example~\ref{ex:2.7}, we have $\beta\geq2$.

Assume $\mathcal{C}$ is Type II, by Lemma~\ref{wandNwareeven}, we have $(\mathbf{1}^\alpha,\mathbf{0}^\beta),(\mathbf{0}^\alpha,\mathbf{u}^\beta)\in \mathcal{C}$. Since  $\mathcal{C}$ is Type II, then $\alpha\equiv0~({\rm mod}~ 4)$, $\beta\equiv0~({\rm mod}~ 2)$. Note that Example~\ref{ex:2.7}, then $\alpha\geq4$, $\beta\geq2$.
\end{proof}

\section{Several constructions of self-dual codes}
In this section, we present several kinds of construction methods for self-dual
codes over $\mathbb{Z}_2R$.
The following theorem is the first one that  self-dual codes over $\mathbb{Z}_2R$ are obtained from other self-dual codes over $\mathbb{Z}_2R$.
\begin{theorem}\label{th:G1G2}
Let $\mathcal{C}$ be a self-dual code  with generator matrix $G=(G_1| G_2)$ of type $(\alpha,\beta;\gamma,\delta;\kappa)$, and $\mathcal{C}'$ be a self-dual code with generator matrix $G'=(G'_1|G'_2)$ of type $(\alpha',\beta';\gamma',\delta';\kappa')$. Then
\[ \left( \begin{array}{cc|cc}
            G_1&0 &G_2 &0 \\
            0& G'_1 & 0&  G'_2
          \end{array}
\right)\]
generates a  self-dual code $\mathcal{M}$ of type $(\alpha+\alpha',\beta+\beta';\gamma+\gamma',\delta+\delta';\kappa+\kappa')$. Moreover, we have the weight enumerator of $\mathcal{M}$ is
$$W_\mathcal{M}(X,Y)=W_\mathcal{C}(X,Y)W_{\mathcal{C}'}(X,Y).$$
\end{theorem}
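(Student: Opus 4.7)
The plan is to check in turn that $\mathcal{M}$ is self-orthogonal, that its cardinality matches the self-dual size, and then to read off the type parameters and the weight enumerator.

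For self-orthogonality I pair up generating rows. A top row of the block matrix has the shape $(g_1\mid\mathbf{0}\mid g_2\mid\mathbf{0})$ with $(g_1\mid g_2)$ a row of $G$, and a bottom row has the shape $(\mathbf{0}\mid g'_1\mid\mathbf{0}\mid g'_2)$ with $(g'_1\mid g'_2)$ a row of $G'$. Using the inner product in \eqref{eq:innerproduct}, the pairing of two top rows collapses (via the zero padding) to the original pairing in $\mathbb{Z}_2^\alpha\times R^\beta$ and vanishes by self-duality of $\mathcal{C}$; the pairing of two bottom rows vanishes likewise from $\mathcal{C}'$; and a mixed pairing vanishes coordinate-wise because one factor is zero at every coordinate. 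Hence $\mathcal{M}\subseteq\mathcal{M}^\perp$.

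Next I compute the size. By Lemma~\ref{lem:selfdualcanshu}, $|\mathcal{C}|=2^{\kappa+\beta}$ and $|\mathcal{C}'|=2^{\kappa'+\beta'}$, and independence of the two generating blocks gives $|\mathcal{M}|=2^{(\kappa+\kappa')+(\beta+\beta')}$. Since $\alpha=2\kappa$ and $\alpha'=2\kappa'$ by the same lemma, $|\mathcal{M}|^{2}=2^{(\alpha+\alpha')+2(\beta+\beta')}=|\mathbb{Z}_2^{\alpha+\alpha'}\times R^{\beta+\beta'}|$, so self-orthogonality forces $\mathcal{M}=\mathcal{M}^\perp$. To identify the type, I put $G$ and $G'$ separately into the standard form of Theorem~\ref{th:standard form}; after a suitable row and coordinate permutation the stacked matrix is again in standard form, with three block-rows of sizes $\kappa+\kappa'$, $(\gamma-\kappa)+(\gamma'-\kappa')$ and $\delta+\delta'$. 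Reading off parameters, $\mathcal{M}$ has type $(\alpha+\alpha',\beta+\beta';\gamma+\gamma',\delta+\delta';\kappa+\kappa')$; alternatively one observes directly that $(\mathcal{M}_b)_X=(\mathcal{C}_b)_X\times(\mathcal{C}'_b)_X$, which yields the $\kappa$-value.

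Finally, every codeword of $\mathcal{M}$ is a unique concatenation $(c\mid c')$ with $c\in\mathcal{C}$, $c'\in\mathcal{C}'$, and the Lee weight splits as $wt_L(c\mid c')=wt_L(c)+wt_L(c')$. Splitting the sum in the definition of $W_\mathcal{M}$ as a double sum and factoring the monomial gives $W_\mathcal{M}(X,Y)=W_\mathcal{C}(X,Y)\,W_{\mathcal{C}'}(X,Y)$. The whole argument is routine bookkeeping; the least mechanical step is matching the block-diagonal generator to the conventions of Theorem~\ref{th:standard form} for the type parameters, but this is an obstacle of notation rather than substance.
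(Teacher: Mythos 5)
Your proposal is correct and follows essentially the same route as the paper's proof: verify orthogonality of the block-diagonal generator rows (top--top and bottom--bottom from self-duality of $\mathcal{C}$ and $\mathcal{C}'$, mixed pairs from disjoint supports), count $|\mathcal{M}|=|\mathcal{C}|\,|\mathcal{C}'|$ to upgrade self-orthogonality to self-duality, and factor the weight enumerator via the decomposition of codewords as concatenations. You are somewhat more explicit than the paper (which leaves the type identification and the weight-enumerator factorization as "easy to check"), but the argument is the same.
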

\begin{proof} For any codeword $(\mathbf{v}| \mathbf{w})\in \mathcal{M}$, we  have
\[ (\mathbf{v}| \mathbf{w})=\left(A_{1\times (\gamma+\delta)},A'_{1\times (\gamma'+\delta')}\right)\left( \begin{array}{cc|cc}
            G_1&0 &G_2 &0 \\
            0& G'_1 & 0&  G'_2
          \end{array}
\right),\]
where  $\left(A_{1\times (\gamma+\delta)},A'_{1\times (\gamma'+\delta')}\right)\in R^{\gamma+\gamma'+\delta+\delta'}$. Note that any two rows of generator $\left( \begin{array}{cc|cc}
            G_1&0 &G_2 &0 \\
            0& G'_1 & 0&  G'_2
          \end{array}
\right)$  are orthogonal, then $\mathcal{M}$ is a self-orthogonal code. Since $\mathcal{C}$  and $\mathcal{C}'$ are self-dual codes, then $\alpha+2\beta=2(\gamma+2\delta)$ and $\alpha'+2\beta'=2(\gamma'+2\delta')$. Note that the length of $\mathcal{M}$ is $\alpha+\alpha'+2\beta+2\beta'$ and $\mid \mathcal{M}\mid=2^{\gamma+\gamma'+2\delta+2\delta'}$, then we have $\mathcal{M}$ is a self-dual code of type $(\alpha+\alpha',\beta+\beta';\gamma+\gamma',\delta+\delta';\kappa+\kappa')$. It is easy to check that
$$W_\mathcal{M}(X,Y)=W_\mathcal{C}(X,Y)W_{\mathcal{C}'}(X,Y).$$
\end{proof}
\begin{cor}
There exist $\mathbb{Z}_2R$ linear self-dual codes of type $(\alpha,\beta;\gamma,\delta;\kappa)$ for all even $\alpha$ and all $\beta$.
\end{cor}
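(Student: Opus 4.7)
The plan is to exhibit two tiny self-dual atoms and then iterate the direct-sum construction of Theorem~\ref{th:G1G2} to reach every prescribed $(\alpha,\beta)$ with $\alpha$ even. The first atom should be a $\mathbb{Z}_2R$ self-dual code with $\beta=0$ that bumps $\alpha$ by $2$, and the second should be a $\mathbb{Z}_2R$ self-dual code with $\alpha=0$ that bumps $\beta$ by $1$. Once both are in hand, Theorem~\ref{th:G1G2} takes care of the rest, because the type parameters $(\alpha,\beta;\gamma,\delta;\kappa)$ simply add under the block-diagonal construction.

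For the first atom I would take the binary repetition code $\langle(1,1)\rangle\subset\mathbb{Z}_2^2$, viewed as a $\mathbb{Z}_2R$ code with no $R$-coordinates. Since $\beta=0$, the inner product reduces to $u(\sum v_iw_i)$, and $u(1+1)=0$ gives self-orthogonality; comparing cardinality with Lemma~\ref{lem:selfdualcanshu} forces self-duality, with type $(2,0;1,0;1)$. For the second atom I would take $\langle u\rangle\subset R$: here $u^2=0$ yields self-orthogonality and $|\langle u\rangle|=2=\sqrt{|R|}$ yields self-duality, with type $(0,1;1,0;0)$.

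Given these two building blocks, the construction is mechanical: apply Theorem~\ref{th:G1G2} to the first atom iterated $\alpha/2$ times, producing a self-dual code of type $(\alpha,0;\alpha/2,0;\alpha/2)$; then apply Theorem~\ref{th:G1G2} once more against $\beta$ copies of the second atom to land at a self-dual code of type $(\alpha,\beta;\alpha/2+\beta,0;\alpha/2)$. This handles every even $\alpha\ge 0$ and every $\beta\ge 0$, with the degenerate case $\alpha=0$ or $\beta=0$ corresponding to using only one of the two atom families.

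I do not expect a real obstacle. The only point needing a sentence of verification is that the two atoms are self-dual in the \emph{$\mathbb{Z}_2R$} sense of the paper rather than merely in the ambient binary or $R$ sense, but this follows directly from the definition of $\langle\cdot,\cdot\rangle$ together with the cardinality criterion $|\mathcal{C}|=2^{\kappa+\beta}$ of Lemma~\ref{lem:selfdualcanshu}.
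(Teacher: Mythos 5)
Your proposal is correct and follows exactly the route the paper intends: the corollary is stated without explicit proof as an immediate consequence of Theorem~\ref{th:G1G2}, and your two atoms $\langle(1\,1)\rangle$ of type $(2,0;1,0;1)$ and $\langle u\rangle$ of type $(0,1;1,0;0)$ (whose direct sum is precisely the code of Example~\ref{exa1}) are the natural building blocks, with both self-duality verifications checking out against Lemma~\ref{lem:selfdualcanshu}. The only point worth a remark is that even $\alpha$ is also necessary by Lemma~\ref{lem:selfdualcanshu} ($\alpha=2\kappa$), so your construction in fact shows the condition is sharp.
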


In the following, we first establish a relationship between    $\mathbb{Z}_2R$ linear self-dual codes and  $\mathbb{Z}_2\mathbb{Z}_4$-additive self-dual codes. From this relationship, then we can construct self-dual codes over $\mathbb{Z}_2R$ form self-dual codes over $\mathbb{Z}_2\mathbb{Z}_4$, or self-dual codes over $\mathbb{Z}_2\mathbb{Z}_4$ form self-dual codes over $\mathbb{Z}_2R$.

Define a map
$$ \theta: \mathbb{Z}_2R \longrightarrow \mathbb{Z}_2\mathbb{Z}_4$$
$$(\mathbf{v}| \mathbf{w}) \longmapsto (\mathbf{v}|\theta(\mathbf{w})),$$
where $(\mathbf{v}|\theta(\mathbf{w}))=(\mathbf{v}| \theta(w_1),\cdots, \theta(w_\beta))$,  $~\theta(0)=0, ~\theta(1)=1, ~\theta(u)=2,~ \theta(1+u)=3.$
\begin{theorem}\label{th:relationZ2Z2utoZ2Z4}
Let $\mathcal{C}$ be a $\mathbb{Z}_2R$  linear code with generator matrix $G$, and for any two codewords $(\mathbf{v}| \mathbf{w}),(\mathbf{x}| \mathbf{y})\in \mathcal{C}$ with  $4 \mid N_{1,1}(\mathbf{w},\mathbf{y})$. If  $\mathcal{C}$ is a self-orthogonal  code, then $\theta(\mathcal{C})$ is also a self-orthogonal code over $\mathbb{Z}_2\mathbb{Z}_4$. Furthermore, if $\mathcal{C}$ is a self-dual  code, then $\theta(\mathcal{C})$ is also a self-dual code over $\mathbb{Z}_2\mathbb{Z}_4$.
\end{theorem}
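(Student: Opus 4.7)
The plan is to verify directly that the $\mathbb{Z}_2\mathbb{Z}_4$ inner product of $\theta(\mathbf{v}|\mathbf{w})$ and $\theta(\mathbf{x}|\mathbf{y})$ vanishes in $\mathbb{Z}_4$ whenever $(\mathbf{v}|\mathbf{w})$ and $(\mathbf{x}|\mathbf{y})$ are mutually orthogonal in $\mathbb{Z}_2R$ and $4\mid N_{1,1}(\mathbf{w},\mathbf{y})$. On the $\mathbb{Z}_2\mathbb{Z}_4$ side the inner product reads $2\sum_i v_ix_i+\sum_j \theta(w_j)\theta(y_j)\in\mathbb{Z}_4$, so the strategy amounts to evaluating the coordinate-wise products $\theta(w_j)\theta(y_j)$ modulo $4$ and then feeding in the two parity identities that are already implicit in the proof of the preceding lemma on $N_s\equiv N_d\pmod 2$.

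The first step is a direct $4\times 4$ case analysis. A coordinate $j$ contributes $1$ to $\sum_j \theta(w_j)\theta(y_j)$ when $w_j=y_j$ is a unit, $3$ when $w_j,y_j$ are distinct units, $2$ when exactly one of $w_j,y_j$ is $u$ and the other is a unit, and $0$ in all remaining cases. Grouping positions according to the notation in the table of Section~2 gives
$$\sum_{j=1}^{\beta}\theta(w_j)\theta(y_j)\equiv N_s+3N_d+2\bigl(N_{1,u}+N_{u,1}\bigr)\pmod{4},$$
so the full $\mathbb{Z}_2\mathbb{Z}_4$ inner product equals $2S+N_s+3N_d+2(N_{1,u}+N_{u,1})\pmod 4$, with $S=\sum_i v_ix_i$ the ordinary integer sum.

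Next I would extract from $\langle(\mathbf{v}|\mathbf{w}),(\mathbf{x}|\mathbf{y})\rangle_R=0$ the two parity conditions already used in the proof of the lemma above, namely $N_s+N_d\equiv 0\pmod 2$ and $S+N_d+N_{1,u}+N_{u,1}\equiv 0\pmod 2$. Doubling the second yields $2S+2(N_{1,u}+N_{u,1})\equiv 2N_d\pmod 4$, and substituting this into the expression above collapses the $\mathbb{Z}_4$-valued inner product to $N_s+5N_d\equiv N_s+N_d=N_{1,1}(\mathbf{w},\mathbf{y})\pmod 4$. The hypothesis $4\mid N_{1,1}$ then forces this to vanish in $\mathbb{Z}_4$, establishing self-orthogonality of $\theta(\mathcal{C})$. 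For the self-dual clause, $\theta$ is a bijection fixing the $\mathbb{Z}_2$-coordinates, so $|\theta(\mathcal{C})|=|\mathcal{C}|$; since $\mathbb{Z}_2^\alpha\times R^\beta$ and $\mathbb{Z}_2^\alpha\times\mathbb{Z}_4^\beta$ have the same cardinality $2^{\alpha}4^{\beta}$, the self-dual balance $|\mathcal{C}|^2=2^{\alpha}4^{\beta}$ transfers verbatim, and combined with the orthogonality just obtained it forces $\theta(\mathcal{C})=\theta(\mathcal{C})^\perp$.

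I expect the principal technical obstacle to be the mod-$4$ bookkeeping needed to pass from two mod-$2$ identities to one mod-$4$ conclusion: the whole reduction hinges on noticing that the residual term after cancellation is exactly $N_{1,1}$, which is precisely what makes the hypothesis $4\mid N_{1,1}$ both natural and sharp. The case-by-case evaluation of $\theta(w_j)\theta(y_j)$ is routine, but one has to resist conflating sums in $R$ with sums in $\mathbb{Z}_4$ (the map $\theta$ is only a set bijection, not an additive homomorphism), which is why the calculation must be performed position-wise and then reassembled via the integer counts $N_s,N_d,N_{1,u},N_{u,1}$.
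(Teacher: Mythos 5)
Your proposal is correct and follows essentially the same route as the paper: both expand the $\mathbb{Z}_2\mathbb{Z}_4$ inner product coordinate-wise in terms of $N_s,N_d,N_{1,u},N_{u,1}$, feed in the two mod-$2$ relations coming from the $R$-orthogonality, and reduce the residual term to $N_{1,1}(\mathbf{w},\mathbf{y})\pmod 4$, which the hypothesis kills. The only point worth tightening is the self-dual clause: since $\theta$ is merely a set bijection and not additive, you should note that $\theta(\mathcal{C})\subseteq\theta(\mathcal{C})^{\perp}=\langle\theta(\mathcal{C})\rangle^{\perp}$ together with the cardinality count forces $\theta(\mathcal{C})$ to coincide with that additive code (the paper handles this by comparing $\theta(\mathcal{C})$ with $\langle\theta(G)\rangle^{\perp}$), after which equality with the dual follows.
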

\begin{proof} Since $\mathcal{C}$ is a self-orthogonal code, then for any two codewords $(\mathbf{v}| \mathbf{w})$ and $(\mathbf{x}| \mathbf{y})\in \mathcal{C}$, we have
\[\langle(\mathbf{v}| \mathbf{w}),(\mathbf{x}| \mathbf{y})\rangle=u\sum_{i=1}^\alpha v_ix_i+\sum_{j=1}^\beta w_jy_j=0\in R. \]
Since $4 \mid N_{1,1}(\mathbf{w})$ and $N_{1,1}(\mathbf{w},\mathbf{y})=N_s(\mathbf{w},\mathbf{y})+N_d(\mathbf{w},\mathbf{y})$, following the notations in Section $2$, we get
\begin{eqnarray*}
% \nonumber to remove numbering (before each equation)
    & & \langle(\mathbf{v}| \mathbf{w}),(\mathbf{w},\mathbf{y})\rangle \\
    &=& u\sum\limits_{i=1}^\alpha v_ix_i+uN_{1,u}(\mathbf{w},\mathbf{y})+uN_{u,1}(\mathbf{w},\mathbf{y})+N_s(\mathbf{w},\mathbf{y})+N_d(\mathbf{w},\mathbf{y})+uN_d(\mathbf{w},\mathbf{y}) \\
 &=& u\left(\sum\limits_{i=1}^\alpha v_ix_i+N_{1,u}(\mathbf{w},\mathbf{y})+N_{u,1}(\mathbf{w},\mathbf{y})+
N_d(\mathbf{w},\mathbf{y})\right)+N_s(\mathbf{w},\mathbf{y})+N_d(\mathbf{w},\mathbf{y})\\
&=& u\left(\sum\limits_{i=1}^\alpha v_ix_i+N_{1,u}(\mathbf{w},\mathbf{y})+N_{u,1}(\mathbf{w},\mathbf{y})+
N_d(\mathbf{w},\mathbf{y})\right)=0\in R,
\end{eqnarray*}
which implies that  $2\mid \left(\sum\limits_{i=1}^\alpha v_ix_i+N_{1,u}(\mathbf{w},\mathbf{y})+N_{u,1}(\mathbf{w},\mathbf{y})+
N_d(\mathbf{w},\mathbf{y})\right)$.
Note that
\begin{eqnarray*}
   & &\langle(\mathbf{v}|\theta(\mathbf{w})),(\mathbf{x}| \theta(\mathbf{y}))\rangle \\ &=&2\sum\limits_{i=1}^\alpha v_ix_i+2N_{1,u}(\mathbf{w},\mathbf{y})+2N_{u,1}(\mathbf{w},\mathbf{y})+N_s(\mathbf{w},\mathbf{y})
+N_d(\mathbf{w},\mathbf{y})+2N_d(\mathbf{w},\mathbf{y}) \\
     & =&2\left(\sum\limits_{i=1}^\alpha v_ix_i+N_{1,u}(\mathbf{w},\mathbf{y})+N_{u,1}(\mathbf{w},\mathbf{y})+N_d\right)+N_s(\mathbf{w},\mathbf{y})+N_d(\mathbf{w},\mathbf{y})\\
     &  =&2\left(\sum\limits_{i=1}^\alpha v_ix_i+N_{1,u}(\mathbf{w},\mathbf{y})+N_{u,1}(\mathbf{w},\mathbf{y})+N_d(\mathbf{w},\mathbf{y})\right)+N_{1,1}(\mathbf{w},\mathbf{y}).
  \end{eqnarray*}
Since $4\mid N_{1,1}(\mathbf{w},\mathbf{y})$  and $2\mid \left(\sum\limits_{i=1}^\alpha v_ix_i+N_{1,u}(\mathbf{w},\mathbf{y})+N_{u,1}(\mathbf{w},\mathbf{y})+
N_d(\mathbf{w},\mathbf{y})\right)$, then
\[\langle(\mathbf{v}|\theta(\mathbf{w})),(\mathbf{x}| \theta(\mathbf{y}))\rangle   =0\in \mathbb{Z}_4.\] Therefore, we have  $\theta(\mathcal{C})$ is  a self-orthogonal code over $\mathbb{Z}_2\mathbb{Z}_4$.

Furthermore, if  $\mathcal{C}$ is a self-dual  code with generator matrix $G$, we have $\langle\theta(\mathcal{C}),\theta(G)\rangle=0$. Hence, $\langle\theta(\mathcal{C}),\langle\theta(G)\rangle\rangle=0$, which implies that $\theta(\mathcal{C})\subseteq \langle\theta(G)\rangle^{\perp}$. Note that $\theta$ is a bijective, then $$|\theta(\mathcal{C})|=2^{\gamma+2\delta},~~~|\langle\theta(G)\rangle^{\perp}|=2^{\alpha+
2\beta}/|\langle\theta(G)\rangle|=2^{\gamma+2\delta}.$$ This implies that $\theta(\mathcal{C})$ is a $\mathbb{Z}_2R$ linear code. Together with $\theta(\mathcal{C})$ is self-orthogonal and $|\theta(\mathcal{C})|=2^{\gamma+2\delta}$, we have $\theta(\mathcal{C})$ is also self-dual.\end{proof}

Similar as proof in Theorem~\ref{th:relationZ2Z2utoZ2Z4}, we have below result.
\begin{theorem}\label{th:relationZ2Z4toZ2Z2u}
Let $\mathcal{C}$ be a $\mathbb{Z}_2R$  linear code with generator matrix $G$, and for any codewords $(\mathbf{v}| \mathbf{w}),(\mathbf{x}| \mathbf{y})\in \mathcal{C}$ such that  $4 \mid N_{1,1}(\mathbf{w},\mathbf{y})$. If  $\mathcal{C}$ is a self-orthogonal  code, then $\theta^{-1}(\mathcal{C})$ is also a self-orthogonal code over $\mathbb{Z}_2R$. Furthermore, if $\mathcal{C}$ is a self-dual  code, then $\theta^{-1}(\mathcal{C})$ is also a self-dual code over $\mathbb{Z}_2R$.
\end{theorem}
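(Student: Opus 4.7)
The plan is to run, in reverse, the calculation already performed in the proof of Theorem~\ref{th:relationZ2Z2utoZ2Z4}. Interpreting $\mathcal{C}$ here as living in $\mathbb{Z}_{2}^{\alpha}\times\mathbb{Z}_{4}^{\beta}$ (so that $\theta^{-1}(\mathcal{C})$ is meaningful as a subset of $\mathbb{Z}_{2}^{\alpha}\times R^{\beta}$), I would first fix two codewords, call them $(\mathbf{v}|\widetilde{\mathbf{w}})$ and $(\mathbf{x}|\widetilde{\mathbf{y}})$, and set $\mathbf{w}=\theta^{-1}(\widetilde{\mathbf{w}})$, $\mathbf{y}=\theta^{-1}(\widetilde{\mathbf{y}})$. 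Using the dictionary $\theta(0)=0,\ \theta(1)=1,\ \theta(u)=2,\ \theta(1+u)=3$, a case analysis of the nine possible products $\theta(w_{i})\theta(y_{i})$ in $\mathbb{Z}_{4}$ reorganises the $\mathbb{Z}_{2}\mathbb{Z}_{4}$-pairing into the four counting statistics from Section~$2$:
\[
\langle (\mathbf{v}|\widetilde{\mathbf{w}}),(\mathbf{x}|\widetilde{\mathbf{y}})\rangle = 2\sum_{i=1}^{\alpha} v_{i}x_{i}+N_{1,1}(\mathbf{w},\mathbf{y})+2\bigl(N_{1,u}+N_{u,1}+N_{d}\bigr)(\mathbf{w},\mathbf{y})\in \mathbb{Z}_{4}.
\]

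The second step uses the hypothesis. Self-orthogonality of $\mathcal{C}$ forces the right-hand side to vanish modulo~$4$, and since $4\mid N_{1,1}(\mathbf{w},\mathbf{y})$ by assumption, the remaining terms give
\[
\sum_{i=1}^{\alpha} v_{i}x_{i}+N_{1,u}(\mathbf{w},\mathbf{y})+N_{u,1}(\mathbf{w},\mathbf{y})+N_{d}(\mathbf{w},\mathbf{y})\equiv 0\pmod{2}.
\]
The third step is to compute the $R$-valued pairing of the preimages and observe that the same case analysis delivers
\[
\langle(\mathbf{v}|\mathbf{w}),(\mathbf{x}|\mathbf{y})\rangle=u\Bigl(\sum_{i=1}^{\alpha} v_{i}x_{i}+N_{1,u}+N_{u,1}+N_{d}\Bigr)+N_{1,1}(\mathbf{w},\mathbf{y})\in R,
\]
which is $0\in R$: the coefficient of $u$ is even by the previous step, and the constant part is even (in fact divisible by $4$) by hypothesis. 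This establishes self-orthogonality of $\theta^{-1}(\mathcal{C})$.

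For the self-dual upgrade I would copy the cardinality argument from Theorem~\ref{th:relationZ2Z2utoZ2Z4}: the map $\theta$ is a bijection $\mathbb{Z}_{2}^{\alpha}\times R^{\beta}\longrightarrow \mathbb{Z}_{2}^{\alpha}\times \mathbb{Z}_{4}^{\beta}$ between ambient sets of the same cardinality $2^{\alpha+2\beta}$, so $|\theta^{-1}(\mathcal{C})|=|\mathcal{C}|=2^{(\alpha+2\beta)/2}$; combined with self-orthogonality, this forces $\theta^{-1}(\mathcal{C})=\theta^{-1}(\mathcal{C})^{\perp}$.

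The only real obstacle I foresee is pure bookkeeping: tabulating the contributions of each of the nine product types correctly, so that exactly the same four statistics $N_{s},N_{d},N_{1,u},N_{u,1}$ appear on both sides of the $\theta$/$\theta^{-1}$ passage, and tracking the mod-$4$ versus mod-$2$ descent driven by the hypothesis $4\mid N_{1,1}$. Once that table is set, the argument is essentially a verbatim reversal of the preceding theorem.
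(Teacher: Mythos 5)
Your proposal is correct and follows exactly the route the paper intends: the paper gives no separate argument for this theorem, stating only that it is proved ``similar as'' Theorem~\ref{th:relationZ2Z2utoZ2Z4}, and your reversal of that computation (the identity $\langle(\mathbf{v}|\theta(\mathbf{w})),(\mathbf{x}|\theta(\mathbf{y}))\rangle=2\bigl(\sum v_ix_i+N_{1,u}+N_{u,1}+N_d\bigr)+N_{1,1}$ in $\mathbb{Z}_4$, the descent to a mod-$2$ condition via $4\mid N_{1,1}$, the matching $R$-valued identity, and the cardinality count for self-duality) is precisely that similar argument, with the bookkeeping done correctly. The only cosmetic point is your need to reinterpret $\mathcal{C}$ as a $\mathbb{Z}_2\mathbb{Z}_4$-additive code so that $\theta^{-1}(\mathcal{C})$ makes sense, but that is a defect of the theorem statement itself, not of your proof.
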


For convience, we Let  $\mathcal{C}$ be a self-dual code of length $\ell=\alpha+\beta$ and $(G_0|G_1)=(\mathbf{g}_{i}|\mathbf{r}_i)$ be the generator matrix of $\mathcal{C}$, where $\mathbf{g}_{i}$($\mathbf{r}_i$) is the $i$-th row of $G_0$($ G_1$), $1\leq i\leq \gamma+\delta$, respectively.
In the following, we use a building-up approach  to construct  self-dual codes of different lengths.  Here, we just give the proof of Theorem~\ref{th:con1}, since Theorem~\ref{th:con2} and Theorem~\ref{th:con3} can be proved by similar way.
\begin{theorem}\label{th:con1}
Assume the notations given as above. Let $\mathbf{x}$ be a vector in $\mathbb{Z}_2^\alpha$ with odd $wt_H(\mathbf{x})$ and $\mathbf{y}$ be a vector in $u\mathbb{Z}_2^\beta$ with $\langle\mathbf{r_i},\mathbf{y}\rangle=0$ for $1\leq i\leq \gamma+\delta$. Suppose that $h_i=\langle\mathbf{g_i},\mathbf{x}\rangle$ for $1\leq i\leq \gamma+\delta$. Then the following matrix
$$G=\left(\begin{array}{ccc|c}
      1 & 0 & \mathbf{x} & \mathbf{y} \\ \hline
      h_1 & h_1 & \mathbf{g}_1 & \mathbf{r}_1 \\
      \vdots & \vdots &  \vdots & \vdots \\
      h_{\gamma+\delta} &  h_{\gamma+\delta} & \mathbf{g}_{\gamma+\delta} & \mathbf{r}_{\gamma+\delta}
    \end{array}\right)
$$
generates a self-dual code $\mathcal{D}$ over $\mathbb{Z}_2R$ of length $\ell+2$. Moreover, we have $\mathcal{C}$ is separable if and only if $\mathcal{D}$ is separable.
\end{theorem}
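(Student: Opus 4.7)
The plan is three-fold: verify that every pair of rows of $G$ is orthogonal under the $\mathbb{Z}_2R$ inner product, count $|\mathcal{D}|$ as $2|\mathcal{C}|$ so that self-duality follows from self-orthogonality by a size comparison, and then reduce the separability equivalence to Theorem~\ref{equivalentseparable} by a short $\mathbb{Z}_2$-inner-product check on $\mathcal{D}_X$.

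For self-orthogonality I would split into three cases. The first row paired with itself yields $u(1 + wt_H(\mathbf{x})) + \langle \mathbf{y}, \mathbf{y}\rangle$: the first summand vanishes because $wt_H(\mathbf{x})$ is odd, and the second vanishes because every entry of $\mathbf{y}$ is $0$ or $u$ and $u^2 = 0$. The first row paired with an extended row $(h_i, h_i, \mathbf{g}_i | \mathbf{r}_i)$ contributes $u(h_i + \langle \mathbf{x}, \mathbf{g}_i\rangle) + \langle \mathbf{y}, \mathbf{r}_i\rangle$, which is zero by the defining identity $h_i = \langle \mathbf{g}_i, \mathbf{x}\rangle$ and the hypothesis $\langle \mathbf{r}_i, \mathbf{y}\rangle = 0$. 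Finally, two extended rows pair to $u(2 h_i h_j + \langle \mathbf{g}_i, \mathbf{g}_j\rangle) + \langle \mathbf{r}_i, \mathbf{r}_j\rangle$, which collapses to the original $\mathcal{C}$-inner product $\langle (\mathbf{g}_i | \mathbf{r}_i), (\mathbf{g}_j | \mathbf{r}_j)\rangle = 0$ since $2 h_i h_j = 0$ and $\mathcal{C}$ is self-dual.

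For the cardinality, the target is $|\mathcal{D}| = 2^{1 + (\alpha + 2\beta)/2}$, the self-dual size in $\mathbb{Z}_2^{\alpha+2} \times R^\beta$. The first row has order $2$ because $u$ annihilates both its $\mathbb{Z}_2$-part via $\eta(u) = 0$ and its $R$-part via $u \mathbf{y} = \mathbf{0}$ (using $\mathbf{y} \in u \mathbb{Z}_2^\beta$). The extended rows span a submodule of the same size as $\mathcal{C}$: any $R$-relation $\sum_i c_i(h_i, h_i, \mathbf{g}_i | \mathbf{r}_i) = 0$ forces $\sum_i c_i (\mathbf{g}_i | \mathbf{r}_i) = 0$ in $\mathcal{C}$, and the extra equation $\sum_i \eta(c_i) h_i = 0$ is then automatic because $\sum_i \eta(c_i) h_i = \langle \sum_i \eta(c_i) \mathbf{g}_i, \mathbf{x}\rangle = 0$. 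The first row is independent of this submodule, since every element of the submodule has equal entries in coordinates $1$ and $2$ (both equal to $\sum_i \eta(c_i) h_i$) while the first row has $(1, 0)$ there. Thus $|\mathcal{D}| = 2|\mathcal{C}|$, and self-orthogonality promotes to self-duality.

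For the separability equivalence I would invoke Theorem~\ref{equivalentseparable} on both $\mathcal{C}$ and $\mathcal{D}$, which reduces the claim to the statement that $\mathcal{D}_X$ is binary self-orthogonal iff $\mathcal{C}_X$ is. The $\mathbb{Z}_2$-span $\mathcal{D}_X$ is generated by $(1, 0, \mathbf{x})$ together with the vectors $(h_i, h_i, \mathbf{g}_i)$; their pairwise binary inner products simplify to $1 + wt_H(\mathbf{x}) \equiv 0$ (by the parity hypothesis), $h_i + h_i = 0$, and $2 h_i h_j + \langle \mathbf{g}_i, \mathbf{g}_j\rangle = \langle \mathbf{g}_i, \mathbf{g}_j\rangle$, so $\mathcal{D}_X$ is self-orthogonal exactly when all $\langle \mathbf{g}_i, \mathbf{g}_j\rangle$ vanish, i.e., when $\mathcal{C}_X$ is. The only step that warrants real care is the independence argument in the cardinality count; the remainder is routine inner-product bookkeeping.
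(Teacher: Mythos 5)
Your proof is correct and follows essentially the same route as the paper: verify $GG^{T}=0$ row by row for self-orthogonality, show $|\mathcal{D}|=2|\mathcal{C}|=2^{\frac{\alpha+2+2\beta}{2}}$ to upgrade self-orthogonality to self-duality, and reduce the separability equivalence to Theorem~\ref{equivalentseparable}. The only differences are cosmetic: you test separability via self-orthogonality of the $X$-components where the paper uses the $Y$-components (orthogonality of $\mathbf{y}$ with itself and with the $\mathbf{r}_i$), and your cardinality count (projection of the extended-row submodule onto $\mathcal{C}$ is a bijection, plus the order-$2$ first row lying outside it) is in fact stated more carefully than the paper's.
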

\begin{proof}
It is easy to obtain
$$GG^T=0.$$
Hence, $\mathcal{D}$ is   self-orthogonal. Since
$|\mathcal{D}|\cdot |\mathcal{D}^\perp|=2^{\alpha+2+2\beta},  $
so $\mathcal{D}$ is   self-dual if and only if
$$|\mathcal{D}|=2^{\frac{\alpha+2+2\beta}{2}}.$$

Now, we begin computing the size of code $\mathcal{D}$. We first assume that there exist two vectors $(\mathbf{v}_1|\mathbf{w}_1)$ and $(\mathbf{v}_2|\mathbf{w}_2)$ such that
\begin{equation*}\label{eq:zhengmingweiyi}
(\mathbf{v}_1|\mathbf{w}_1)G=(\mathbf{v}_2|\mathbf{w}_2)G, ~\mbox{where}~\mathbf{v}_i\in \mathbb{Z}_2,~\mathbf{w}_i\in R^{\gamma+\delta},~i=1,2,
\end{equation*}
i.e.
$$(\mathbf{v}|\mathbf{w})G=0 , ~~\mbox{where}~\mathbf{v}=\mathbf{v}_1-\mathbf{v}_2,~ \mathbf{w}=\mathbf{w}_1-\mathbf{w}_2 .$$
Then
$$\mathbf{v}(1~0~\mathbf{x}~|~\mathbf{y})+\mathbf{w}\left(\begin{array}{ccc|c}
      h_1 & h_1 & \mathbf{g}_1 & \mathbf{r}_1 \\
      \vdots & \vdots &  \vdots & \vdots \\
      h_{\gamma+\delta} &  h_{\gamma+\delta} & \mathbf{g}_{\gamma+\delta} & \mathbf{r}_{\gamma+\delta}
    \end{array}\right)=0.$$
We have
$$\mathbf{v}+\mathbf{w}\left(\begin{array}{c}
       h_1  \\
      \vdots  \\
     h_{\gamma+\delta}
    \end{array}\right)=0
~\mbox{and}~
 \mathbf{w}\left(\begin{array}{c}
       h_1  \\
      \vdots  \\
     h_{\gamma+\delta}
    \end{array}\right)=0. $$
This implies that $\mathbf{v}=0$. Hence, we have
\begin{equation}\label{eq:liangge}
 \mathbf{w}\left(\begin{array}{c|c}
       g_1  &r_1  \\
      \vdots& \vdots \\
     g_{\gamma+\delta}&r_{\gamma+\delta}
    \end{array}\right)=0.
\end{equation}
Note that
$$\left(\begin{array}{c|c}
     \mathbf{g}_1 & \mathbf{r}_1 \\
        \vdots & \vdots \\ \mathbf{g}_{\gamma+\delta} & \mathbf{r}_{\gamma+\delta}
    \end{array}\right)$$
is the generator matrix of $\mathcal{C}$, then from equation~\eqref{eq:liangge}, we have $\mathbf{w}=0$. This shows that
$$ \mathcal{D}=2^{\gamma+1+2\delta}.$$
Note that $\mathcal{C}$ is self-dual, then $\alpha+2\beta=2(\gamma+2\delta).$ Together with $|\mathcal{D}|=2^{\frac{\alpha+2+2\beta}{2}}$, we obtain $\mathcal{D}$ is self-dual.

It is easy to see
$\mathbf{y}$ is orthogonal to $\mathbf{y}$ and $\mathbf{r}_i$, where $1\leq i\leq \gamma+\delta$. Therefore, by Theorem~\ref{equivalentseparable}, we have $\mathcal{D}$ is separable if and only if $\mathcal{C}$ is separable.
\end{proof}
\begin{theorem}\label{th:con2}
Assume the notations given as above. Let $\mathbf{y}$ be a vector in $R^\beta$ with odd $ wt_L(\mathbf{y})$ and $\mathbf{x}$ be a vector in $ \mathbb{Z}_2^\alpha$ such that $wt_H(x)$ is even and $\langle\mathbf{g_i},\mathbf{x}\rangle=0$ for $1\leq i\leq \gamma+\delta$. Let $t$ be a unit in $R$ and $s_i=\langle\mathbf{r_i},\mathbf{y}\rangle$ for $1\leq i\leq \gamma+\delta$. Then the following matrix
$$G=\left(\begin{array}{c|ccc}
     \mathbf{x}& 1 & 0 &   \mathbf{y} \\ \hline
    \mathbf{g}_1 & s_1 & ts_1 & \mathbf{r}_1   \\
      \vdots & \vdots &  \vdots & \vdots \\
      \mathbf{g}_{\gamma+\delta}& s_{\gamma+\delta} &  ts_{\gamma+\delta} & \mathbf{r}_{\gamma+\delta}
    \end{array}\right)
$$
generates a self-dual code $\mathcal{E}$ over $\mathbb{Z}_2R$ of length $m+2$. Moreover, we have $\mathcal{E}$ is separable if and only if $\mathcal{C}$ is separable.
\end{theorem}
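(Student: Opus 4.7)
The plan is to mirror the structure of the proof of Theorem~\ref{th:con1}: first I would establish self-orthogonality of $\mathcal{E}$ by computing $GG^T$, then count $|\mathcal{E}|$ to upgrade self-orthogonality to self-duality, and finally verify the separability equivalence via Theorem~\ref{equivalentseparable}.

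For self-orthogonality, let $\mathbf{v}_0=(\mathbf{x}\mid 1,0,\mathbf{y})$ and $\mathbf{v}_i=(\mathbf{g}_i\mid s_i, ts_i, \mathbf{r}_i)$ for $1\leq i\leq\gamma+\delta$ denote the rows of $G$, and check three types of inner products. For $\langle\mathbf{v}_0,\mathbf{v}_0\rangle$, the $\mathbb{Z}_2$-part contributes $u\cdot wt_H(\mathbf{x})=0$ since $wt_H(\mathbf{x})$ is even, and the $R$-part contributes $1+\sum y_i^2=1+N(\mathbf{y})$ in $R$, which vanishes because $N(\mathbf{y})\equiv wt_L(\mathbf{y})\pmod{2}$ is odd by hypothesis. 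For $\langle\mathbf{v}_0,\mathbf{v}_i\rangle$, the $\mathbb{Z}_2$-part gives $u\langle\mathbf{g}_i,\mathbf{x}\rangle=0$ by hypothesis, while the $R$-part gives $s_i+\langle\mathbf{y},\mathbf{r}_i\rangle=s_i+s_i=0$. For $\langle\mathbf{v}_i,\mathbf{v}_j\rangle$ with $i,j\geq 1$, self-orthogonality of $\mathcal{C}$ removes the original contribution, leaving $s_is_j(1+t^2)$; since every unit of $R$ is either $1$ or $1+u$ and both satisfy $t^2=1$, this term vanishes. Hence $GG^T=0$ and $\mathcal{E}$ is self-orthogonal.

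Next I would count $|\mathcal{E}|$. Observe that the assignment $e:(\mathbf{v}\mid\mathbf{w})\mapsto\bigl(\mathbf{v}\mid\langle\mathbf{w},\mathbf{y}\rangle,\,t\langle\mathbf{w},\mathbf{y}\rangle,\,\mathbf{w}\bigr)$ is well-defined on $\mathcal{C}$ (because $\langle\cdot,\mathbf{y}\rangle$ is $R$-linear in $\mathbf{w}$) and is an injective $R$-linear map. Thus $\mathcal{C}':=e(\mathcal{C})\subseteq\mathcal{E}$ has $|\mathcal{C}|$ elements and $\mathcal{E}=\mathcal{C}'+R\mathbf{v}_0$. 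The submodule $R\mathbf{v}_0$ has exactly four elements because the next-to-last $R$-coordinate of $d\mathbf{v}_0=(\eta(d)\mathbf{x}\mid d,0,d\mathbf{y})$ is $d$, which is faithful in $d$. Comparing coordinates in the equation $e(\mathbf{v}\mid\mathbf{w})=d\mathbf{v}_0$ gives $\langle\mathbf{w},\mathbf{y}\rangle=d$ from the next-to-last $R$-slot and $t\langle\mathbf{w},\mathbf{y}\rangle=0$ from the last of the two new slots; since $t$ is a unit, this forces $\langle\mathbf{w},\mathbf{y}\rangle=0$ and hence $d=0$. Therefore $\mathcal{C}'\cap R\mathbf{v}_0=\{\mathbf{0}\}$ and $|\mathcal{E}|=4|\mathcal{C}|=2^{(\alpha+2\beta)/2+2}$. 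Combined with self-orthogonality and the Gray length $\alpha+2\beta+4$ of $\mathcal{E}$, this upgrades $\mathcal{E}$ to a self-dual code.

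For separability, Theorem~\ref{equivalentseparable} reduces the question to whether $\mathcal{E}_X$ is self-orthogonal. Now $\mathcal{E}_X$ is generated by $\mathcal{C}_X$ together with $\mathbf{x}$; since $\langle\mathbf{g}_i,\mathbf{x}\rangle=0$ for all $i$ and $wt_H(\mathbf{x})$ is even, $\mathbf{x}$ is orthogonal to every element of $\mathcal{C}_X$ and to itself, so $\mathcal{E}_X$ is self-orthogonal exactly when $\mathcal{C}_X$ is. A second application of Theorem~\ref{equivalentseparable} closes the equivalence $\mathcal{E}$ separable $\iff$ $\mathcal{C}$ separable. The main obstacle is the intersection computation for $|\mathcal{E}|$: the precise placement of $ts_i$ opposite the $0$ in $\mathbf{v}_0$, together with $t$ being a unit, is what forces the intersection to be trivial; without the unit $t$ (or without the $0$ in that slot of $\mathbf{v}_0$) the count would collapse and the construction would fail to yield a self-dual code.
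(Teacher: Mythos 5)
Your proof is correct and follows essentially the same route the paper takes for Theorem~\ref{th:con1} (the paper omits the proof of Theorem~\ref{th:con2}, stating only that it can be proved similarly): self-orthogonality from $GG^{T}=0$, a cardinality count giving $|\mathcal{E}|=4|\mathcal{C}|=2^{(\alpha+2\beta+4)/2}$, and the separability equivalence via Theorem~\ref{equivalentseparable}. Your individual computations --- in particular $1+\sum y_i^2=1+N(\mathbf{y})=0$ in $R$ because $wt_L(\mathbf{y})\equiv N(\mathbf{y})\pmod 2$ is odd, $1+t^2=0$ for any unit $t\in R$, and the trivial intersection $e(\mathcal{C})\cap R\mathbf{v}_0=\{\mathbf{0}\}$ forced by the placement of $(s_i,ts_i)$ opposite $(1,0)$ --- all check out.
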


\begin{theorem}\label{th:con3}
Assume the notations  given as above. Let $\mathbf{x}$ be a vector in $\mathbb{Z}_2^\alpha$ with odd $wt_H(\mathbf{x})$ and $\mathbf{y}$ be a vector in $R^\beta$ with odd $wt_L(\mathbf{y})$. Let $\mathbf{e}$ be a vector in $ \mathbb{Z}_2^\alpha$ such that $wt_H(\mathbf{e})$ is even and $\langle\mathbf{g_i},\mathbf{e}\rangle=0$ for $1\leq i\leq \gamma+\delta$, and $\mathbf{a}$ be a vector in $u \mathbb{Z}_2^\beta$ satisfying $\langle\mathbf{r_i},\mathbf{a}\rangle=0$ for $1\leq i\leq \gamma+\delta$. Suppose that $\langle (\mathbf{x}|\mathbf{a}),(\mathbf{e}|\mathbf{y})\rangle=0$, $h_i=\langle\mathbf{g_i},\mathbf{x}\rangle$, $s_i=\langle\mathbf{r_i},\mathbf{y}\rangle$ for $1\leq i\leq \gamma+\delta$. Let $t$ be a unit in $R$, then the following matrix
$$G=\left(\begin{array}{ccc|ccc}
  1 & 0& \mathbf{x}& 0 & 0 &   \mathbf{a} \\
0 & 0&\mathbf{e}& 1 & 0 &   \mathbf{y} \\ \hline
  h_1 & h_1& \mathbf{g}_1  & s_1 & ts_1 & \mathbf{r}_1   \\
  \vdots & \vdots & \vdots & \vdots &  \vdots & \vdots \\
  h_{\gamma+\delta}  &h_{\gamma+\delta} & \mathbf{g}_{\gamma+\delta}& s_{\gamma+\delta} &  ts_{\gamma+\delta} & \mathbf{r}_{\gamma+\delta}
    \end{array}\right)
$$
generates a self-dual code $\mathcal{F}$ over $\mathbb{Z}_2R$ of length $m+4$. Furthermore, if $\langle\mathbf{x},\mathbf{e}\rangle=0\in \mathbb{Z}_2$, we have $\mathcal{F}$ is separable if and only if $\mathcal{C}$ is separable. Otherwise, we have $\mathcal{F}$ is non-separable.
\end{theorem}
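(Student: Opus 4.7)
The plan is to mirror the building-up proof of Theorem~\ref{th:con1}, handling the two extra generator rows simultaneously, and then to read off the separability dichotomy from the binary projection $\mathcal{F}_X$ via Theorem~\ref{equivalentseparable}.

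For self-orthogonality I would verify $GG^T=0$ by case analysis on pairs of rows. Row~1 with itself uses that $1+wt_H(\mathbf{x})$ is even (since $wt_H(\mathbf{x})$ is odd) and that $\mathbf{a}\in u\mathbb{Z}_2^\beta$ has no unit coordinates; row~2 with itself uses that $wt_H(\mathbf{e})$ is even and that $wt_L(\mathbf{y})\equiv N(\mathbf{y})\equiv 1\pmod 2$, so the leading $1$ of row~2 combines with $N(\mathbf{y})$ to give $1+N(\mathbf{y})\equiv 0$. The cross product of the two new rows equals $u\langle\mathbf{x},\mathbf{e}\rangle+\langle\mathbf{a},\mathbf{y}\rangle=\langle(\mathbf{x}|\mathbf{a}),(\mathbf{e}|\mathbf{y})\rangle=0$ by hypothesis. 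For a new row against an original row, the duplicated $h_i$ or $s_i$ entries telescope ($h_i+h_i=0$, $s_i+s_i=0$) and the hypotheses $h_i=\langle\mathbf{g}_i,\mathbf{x}\rangle$, $s_i=\langle\mathbf{r}_i,\mathbf{y}\rangle$, $\langle\mathbf{g}_i,\mathbf{e}\rangle=0$, $\langle\mathbf{r}_i,\mathbf{a}\rangle=0$ kill what remains. For two original rows the new columns contribute $2uh_ih_j=0$ and $s_is_j(1+t^2)$, which vanishes because every unit $t\in R$ satisfies $t^2=1$ in characteristic two, so the product reduces to $\langle(\mathbf{g}_i|\mathbf{r}_i),(\mathbf{g}_j|\mathbf{r}_j)\rangle=0$.

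Next I would count $|\mathcal{F}|$ to promote self-orthogonality to self-duality. A codeword is $v_1\cdot(\text{row }1)+d_2\cdot(\text{row }2)+\sum c_i\cdot(\text{row }i{+}2)$ with $v_1\in\mathbb{Z}_2$ (scaling row~1 by $u$ annihilates it, because $\mathbf{a}\in u\mathbb{Z}_2^\beta$) and $d_2,c_i\in R$. Setting such a sum to zero and inspecting the two new binary columns forces $v_1+\sum\eta(c_i)h_i=0$ and $\sum\eta(c_i)h_i=0$, hence $v_1=0$; the two new $R$-columns (the second being $t$ times the first with $t$ a unit) force $d_2+\sum c_is_i=0$ and $\sum c_is_i=0$, hence $d_2=0$; what remains is $\sum c_i(\mathbf{g}_i|\mathbf{r}_i)=0$, which is merely a relation already present in $\mathcal{C}$. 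Hence $|\mathcal{F}|=2\cdot 4\cdot|\mathcal{C}|=2^{\gamma+3+2\delta}$, matching $2^{((\alpha+2)+2(\beta+2))/2}$ via $\alpha+2\beta=2(\gamma+2\delta)$, so $\mathcal{F}$ is self-dual.

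For the separability dichotomy I would apply Theorem~\ref{equivalentseparable}: $\mathcal{F}$ is separable iff $\mathcal{F}_X$ is binary self-orthogonal, where $\mathcal{F}_X$ is the $\mathbb{Z}_2$-span of $(1,0,\mathbf{x})$, $(0,0,\mathbf{e})$ and the $(h_i,h_i,\mathbf{g}_i)$. All self-products and new-vs-old products vanish automatically using the weight and orthogonality hypotheses, so only two residual conditions remain: $\langle\mathbf{x},\mathbf{e}\rangle=0$ (the new-vs-new pair) and $\langle\mathbf{g}_i,\mathbf{g}_j\rangle\equiv 0\pmod 2$ for all $i,j$, which by Theorem~\ref{equivalentseparable} is equivalent to separability of $\mathcal{C}$. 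Hence if $\langle\mathbf{x},\mathbf{e}\rangle=0$ then $\mathcal{F}$ is separable iff $\mathcal{C}$ is, while if $\langle\mathbf{x},\mathbf{e}\rangle=1$ the new pair already breaks self-orthogonality of $\mathcal{F}_X$, forcing $\mathcal{F}$ to be non-separable. The main obstacle is bookkeeping in the self-orthogonality check; the conceptual keystone is the characteristic-two identity $t^2=1$ for units of $R$, without which the $s_is_j(1+t^2)$ correction would survive and the original rows could not be reused intact.
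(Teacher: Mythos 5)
Your proposal is correct and follows exactly the building-up template the paper uses for Theorem~\ref{th:con1} (verify $GG^T=0$ row by row, count $|\mathcal{F}|$ by showing the only relations among the generators are those already present in $\mathcal{C}$, then apply Theorem~\ref{equivalentseparable} to the punctured code), which is precisely the route the paper indicates when it says Theorem~\ref{th:con3} ``can be proved by similar way.'' The details you supply --- in particular $wt_L(\mathbf{y})\equiv N(\mathbf{y})\pmod 2$, the cancellation $s_is_j(1+t^2)=0$ from $t^2=1$, and reading the separability dichotomy off the new-versus-new pair $\langle\mathbf{x},\mathbf{e}\rangle$ in $\mathcal{F}_X$ --- all check out.
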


%From Theorems~\ref{th:relationZ2Z2utoZ2Z4} and \ref{th:relationZ2Z4toZ2Z2u}, the following result is obtained.
%\begin{cor}
%Assume the notations given as above. Then the number of self-dual codes $\mathcal{C}_u$   over $\mathbb{Z}_2\mathbb{Z}_2[u]$ is same as self-dual codes $\mathcal{C}_4$ over  $\mathbb{Z}_2\mathbb{Z}_4$.
%\end{cor}

%In the following, we give a construction of self-dual codes over $\mathbb{Z}_2R$ from existent self-dual codes.

%Let $A_i^\mathcal{C}$ denote the number of codewords with Lee weight $i$ in $\mathcal{C}$, and $A_i^\mathcal{D}$ denote the number of codewords with Lee weight $i$ in $\mathcal{D}$. Denote $S_\mathcal{C}$ ($S_\mathcal{D}$) is the set of all Lee weight of $\mathcal{C}$ ($\mathcal{D}$), respectively.

%\begin{theorem}\label{th:constructionselfdual}
%Let $G_{m\times \alpha}$ generater a binary self-orthogonal code in $\mathbb{Z}_2^\alpha$, and $\mathcal{G}_{m\times \beta}$ be a matrix with all elements of set  $\{0,u\}$. Then we have that \[\left( \begin{array}{c|c}
%                     G_{m\times \alpha} & \mathcal{G}_{m\times \beta} \\
%                     \mathcal{M}_1 & \mathcal{M }_2
%                   \end{array}
%   \right)\]
%generates a self-dual code over $\mathbb{Z}_2^\alpha\times \mathbb{Z}_2[u]^\beta$, where $\mathcal{M}_1$, $\mathcal{M }_2$ satisfy the following conditions:
%\begin{itemize}
%  \item for any row $x\in \mathcal{M}_1$, $wt_H(x)$ is even,
%  \item ?
%  \item ?
%\end{itemize}
%\end{theorem}

%%%%%%%%%%%%%%%%%%%%%%%%%%%%%%%%%%%%%%%%%%%%%%%%%%%%%%%%%
\section{The structure  of two-weight self-dual codes}
Note that if $\mathcal{C}$ is self-dual, then $(\mathbf{1}^\alpha|\mathbf{0}^\beta)$, $ (\mathbf{0}^\alpha|\mathbf{u}^\beta)\in\mathcal{C}$, where $\alpha \cdot\beta\neq 0$.  This implies that $\mathcal{C}$ has at least two different weights. So, we study a class of self-dual codes with two different weights and get the structure  of these self-dual  codes. Now, we  assume that  $\mathcal{C}$ is a   two-weight  linear code  over $\mathbb{Z}_2R$ with $\alpha \cdot\beta\neq 0$, and $(\mathbf{1}^\alpha|\mathbf{0}^\beta)$, $ (\mathbf{0}^\alpha|\mathbf{u}^\beta)\in\mathcal{C}$.   Since $(\mathbf{1}^\alpha|\mathbf{0}^\beta)$, $ (\mathbf{0}^\alpha|\mathbf{u}^\beta)$ and  $(\mathbf{1}^\alpha|\mathbf{u}^\beta)$ are all in $\mathcal{C}$, then  $\alpha=2\beta$. Note that
$\alpha+2\beta=n$, then $\alpha=n/2,$ $\beta=n/4$. Note that if $4\nmid n$, these codes do not exist. Thus,
 we assume $4\mid n$ in this section.
%\begin{define}
%Let $\mathcal{C}$ be a nonzero code. If all of its nonzero codewords just have two weights, then $\mathcal{C}$ is called two weight code. In this case, if the two weights is $m_{1},~m_{2}$, then the code is called a two weight code with weight $m_{1},~m_{2}$.
%\end{define}
\begin{lemma}\label{lem:weightdis}
Assume the notations given as above. Then the weight distribution of linear code  $\mathcal{C}$ is obtained as follows
\begin{table}[!h]
  \centering
\begin{tabular}{c|c}
  \hline
  % after \\: \hline or \cline{col1-col2} \cline{col3-col4} ...
  Weight & Frenquence\\\hline
$0$ & $1$ \\ \hline
 $n$ & $1$ \\ \hline
$\frac{n}{2}$ & $|\mathcal{C}|-2$ \\
  \hline
\end{tabular}
\end{table}
 \end{lemma}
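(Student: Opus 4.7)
The plan is to identify three distinguished codewords that already exhibit Lee weights $n/2$ and $n$, then exploit the two-weight hypothesis to pin down the weight distribution completely.

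First I would compute the Lee weights of the three guaranteed codewords. Using $wt_L(\mathbf{v}|\mathbf{w}) = wt_H(\mathbf{v}) + wt_L(\mathbf{w})$ together with $wt_L(u) = 2$ (since $\phi(u) = (1,1)$), and the parameter relations $\alpha = n/2$, $\beta = n/4$ already established, one finds
\begin{align*}
wt_L(\mathbf{1}^\alpha | \mathbf{0}^\beta) &= \alpha = n/2, \\
wt_L(\mathbf{0}^\alpha | \mathbf{u}^\beta) &= 2\beta = n/2, \\
wt_L(\mathbf{1}^\alpha | \mathbf{u}^\beta) &= \alpha + 2\beta = n.
\end{align*}
Since $\mathcal{C}$ is assumed to be two-weight and these values are distinct and nonzero, the two nonzero weights appearing in $\mathcal{C}$ must be exactly $n/2$ and $n$.

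Next I would show that the codeword of weight $n$ is unique. Suppose $(\mathbf{v}|\mathbf{w}) \in \mathcal{C}$ has Lee weight $n = \alpha + 2\beta$. Since each $\mathbb{Z}_2$ coordinate contributes at most $1$ to the Hamming weight and each $R$ coordinate contributes at most $2$ to the Lee weight (with equality only when that coordinate equals $u$, because $\phi(1)=(0,1)$ and $\phi(1+u)=(1,0)$ both have Hamming weight $1$ while $\phi(u)=(1,1)$ has Hamming weight $2$), achieving the maximal value $n$ forces $\mathbf{v} = \mathbf{1}^\alpha$ and $w_i = u$ for every $i$. Hence $(\mathbf{1}^\alpha|\mathbf{u}^\beta)$ is the only codeword of weight $n$, contributing frequency $1$.

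Finally, the zero codeword accounts for the weight-$0$ entry with frequency $1$, and the remaining $|\mathcal{C}| - 2$ codewords must all carry the other nonzero weight $n/2$, completing the distribution. I do not anticipate any serious obstacle here; the only mildly delicate point is the saturation argument for the weight-$n$ codeword, where one must carefully use that among the three nonzero elements of $R$, only $u$ achieves Lee weight $2$.
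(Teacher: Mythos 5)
Your proof is correct and is essentially the argument the paper intends: the surrounding text already derives $\alpha=2\beta$ from the three distinguished codewords $(\mathbf{1}^\alpha|\mathbf{0}^\beta)$, $(\mathbf{0}^\alpha|\mathbf{u}^\beta)$, $(\mathbf{1}^\alpha|\mathbf{u}^\beta)$, and the lemma itself is stated in the paper without proof. Your two steps --- identifying $n/2$ and $n$ as the only possible nonzero weights, and showing the weight-$n$ codeword is unique because among the nonzero elements of $R$ only $u$ has Lee weight $2$ --- supply exactly the missing justification.
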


We define a  linear subcode of $\mathcal{C}$ as follows
\[ \mathcal{C}^*=\{(\mathbf{0}^\alpha|\mathbf{0}^\beta),(\mathbf{1}^\alpha|\mathbf{0}^\beta)
(\mathbf{0}^\alpha|\mathbf{u}^\beta)(\mathbf{1}^\alpha|\mathbf{u}^\beta)\}.\]
\begin{lemma}\label{lem:TW1}
Let $\mathcal{C}$ be a   two-weight  linear code. For any codeword $(\mathbf{v}|\mathbf{w})\in \mathcal{C}\setminus \mathcal{C}^*$, we have
\begin{itemize}
  \item  $N(\mathbf{w})=\frac{n}{4},~N_u(\mathbf{w})=0,~wt_H(\mathbf{v})=\frac{n}{4};$ or
  \item   $N(\mathbf{w})=0,~N_u(\mathbf{w})=\frac{n}{8},~wt_H(\mathbf{v})=\frac{n}{4}.$

\end{itemize}

\end{lemma}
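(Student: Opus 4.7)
The plan is to extract two pieces of information about each codeword $(\mathbf{v}|\mathbf{w}) \in \mathcal{C} \setminus \mathcal{C}^*$: one from the Lee weight of $(\mathbf{v}|\mathbf{w})$ itself together with an additive translate by $(\mathbf{1}^\alpha|\mathbf{0}^\beta) \in \mathcal{C}^*$, and one from the $R$-scalar multiple $u(\mathbf{v}|\mathbf{w}) \in \mathcal{C}$. Throughout, I will use Lemma~\ref{lem:weightdis}, which restricts every Lee weight to the set $\{0,\,n/2,\,n\}$, and the fact that $\alpha = 2\beta$ and $n = 4\beta$.

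First, I would observe that $wt_L(\mathbf{v}|\mathbf{w}) = wt_H(\mathbf{v}) + N(\mathbf{w}) + 2N_u(\mathbf{w})$ and that the only codeword of Lee weight $n$ is $(\mathbf{1}^\alpha|\mathbf{u}^\beta)$; this is because $wt_H(\mathbf{v}) \le \alpha = 2\beta$ and $N(\mathbf{w}) + 2N_u(\mathbf{w}) \le 2\beta$, and simultaneous equality forces $\mathbf{v} = \mathbf{1}^\alpha$ and $\mathbf{w} = \mathbf{u}^\beta$, putting the codeword in $\mathcal{C}^*$. Hence $wt_L(\mathbf{v}|\mathbf{w}) = n/2$, giving the identity $wt_H(\mathbf{v}) + N(\mathbf{w}) + 2N_u(\mathbf{w}) = 2\beta$. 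Next, consider $(\mathbf{v}|\mathbf{w}) + (\mathbf{1}^\alpha|\mathbf{0}^\beta) = (\mathbf{v} + \mathbf{1}^\alpha | \mathbf{w}) \in \mathcal{C}$; its Lee weight is $(\alpha - wt_H(\mathbf{v})) + N(\mathbf{w}) + 2N_u(\mathbf{w})$. The weights $0$ and $n$ would both place $(\mathbf{v}|\mathbf{w})$ inside $\mathcal{C}^*$, so the weight equals $n/2$, which combined with the earlier identity forces $wt_H(\mathbf{v}) = \beta = n/4$ and $N(\mathbf{w}) + 2N_u(\mathbf{w}) = n/4$.

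The key step is to then multiply $(\mathbf{v}|\mathbf{w})$ by $u \in R$. Using the scalar multiplication from \eqref{eq:1.1q} and the fact that $\eta(u) = 0$, we obtain $u(\mathbf{v}|\mathbf{w}) = (\mathbf{0}^\alpha\,|\,u\mathbf{w}) \in \mathcal{C}$. Since $u \cdot 0 = u \cdot u = 0$ and $u \cdot 1 = u \cdot (1+u) = u$, each unit entry of $\mathbf{w}$ contributes a $u$ to $u\mathbf{w}$ and every other entry contributes $0$, whence $wt_L(u(\mathbf{v}|\mathbf{w})) = 2N(\mathbf{w})$. Lemma~\ref{lem:weightdis} requires this to lie in $\{0,\,n/2,\,n\}$, and $2N(\mathbf{w}) \le 2\beta < 4\beta = n$ rules out weight $n$. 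Thus either $N(\mathbf{w}) = 0$, in which case $N(\mathbf{w}) + 2N_u(\mathbf{w}) = n/4$ gives $N_u(\mathbf{w}) = n/8$, or $N(\mathbf{w}) = \beta = n/4$, forcing $N_u(\mathbf{w}) = 0$. These are precisely the two alternatives in the statement.

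I expect no serious obstacle. The only delicate points are being careful that every weight computation rules out $0$ and $n$ strictly via non-membership in $\mathcal{C}^*$, and explicitly invoking the $R$-module structure (as opposed to mere additivity) to legitimize the introduction of $u(\mathbf{v}|\mathbf{w})$; without $R$-linearity the decisive third step has no content. Note also that the second alternative tacitly requires $8 \mid n$ for $n/8$ to be an integer, so in the case $8 \nmid n$ only the first alternative can actually occur, but the disjunctive statement of the lemma remains correct.
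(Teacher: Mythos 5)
Your proof is correct and follows essentially the same route as the paper's: both rest on the weight distribution $\{0,\tfrac{n}{2},n\}$ from Lemma~\ref{lem:weightdis}, the scalar multiple $u(\mathbf{v}|\mathbf{w})$ to pin down $N(\mathbf{w})$, and a translate by an element of $\mathcal{C}^*$ to pin down the remaining parameters (you translate by $(\mathbf{1}^\alpha|\mathbf{0}^\beta)$ where the paper uses $(\mathbf{0}^\alpha|\mathbf{u}^\beta)$ in its second case, a cosmetic difference). Your version is in fact somewhat more careful than the paper's, since you explicitly rule out the weights $0$ and $n$ for each auxiliary codeword via non-membership in $\mathcal{C}^*$, a point the paper leaves implicit.
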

%\setminus\{(\mathbf{1}^\alpha|\mathbf{u}^\beta),(\mathbf{0}^\alpha|\mathbf{0}^\beta)\}
\begin{proof}
If $N(\mathbf{w})\neq0$, then
$$wt_L(u(\mathbf{v}|\mathbf{w}))=wt_L(u \mathbf{w} )=2N(\mathbf{w})=\frac{n}{2}.$$
Thus, $N(\mathbf{w})=\frac{n}{4}$. Note that $wt_L (\mathbf{v}|\mathbf{w}) =\frac{n}{2}$ and $\beta=\frac{n}{4}$, we have $wt_H(\mathbf{v})=\frac{n}{4}$, $N_u(\mathbf{w})=0$.

If $N(\mathbf{w})=0$, then $N_u(\mathbf{w})\neq0$.   Note that $(\mathbf{0}^\alpha|\mathbf{u}^\beta)\in \mathcal{C}$, then
\[wt_L\left( (\mathbf{v}|\mathbf{w})+(\mathbf{0}^\alpha|\mathbf{u}^\beta)\right)=  \frac{n}{2} .\]
Thus $N_u(\mathbf{w})=\frac{n}{8}$, $wt_H(\mathbf{v})=\frac{n}{4}$.
\end{proof}

\begin{example}
Let $\mathcal{C}_1$ be generated by $\left(
                                                        \begin{array}{cc|c}
                                                          1 & 1 & 0 \\
                                                          0 & 0 & u \\
                                                        \end{array}
                                                      \right)
$, then $$\mathcal{C}_1=\{(00|0),(11|0),(00| u),(11| u)\}.$$ It is easy to check that $\mathcal{C}_1$ is a   seif-dual code with two nonzero weights.

Let $\mathcal{C}_2$ be generated by $\left(
                                                        \begin{array}{cc|c}
                                                          1 & 1 & 0 \\
                                                          1 & 0 & 1 \\
                                                        \end{array}
                                                      \right)
$, then $$\mathcal{C}_2=\{(00|0),(11|0),(10| 1),(01| 1),(10|1),(10|1+u),(01|1),(01|1+u)\}.$$ It is a two-weight linear code, which confirms the result in Lemma~\ref{lem:TW1}.
\end{example}

By above discussion, we get the maximal  bound of $\delta$.
\begin{lemma}\label{cor:delta1}
Assume the notations given as above, then  $\delta\leq1$.
\end{lemma}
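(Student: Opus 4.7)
The plan is to analyze the coset structure of $\mathcal{C}/\mathcal{C}_{b}$ using Lemma~\ref{lem:TW1}. First I will observe that $\mathcal{C}^{*}\subseteq \mathcal{C}_{b}$, since every element of $\mathcal{C}^{*}$ has its $R$-part equal to $\mathbf{0}^{\beta}$ or $\mathbf{u}^{\beta}$, which both lie in $u\mathbb{Z}_{2}^{\beta}$. Consequently, any codeword $(\mathbf{v}|\mathbf{w})\in \mathcal{C}\setminus \mathcal{C}_{b}$ lies outside $\mathcal{C}^{*}$, and its $R$-part $\mathbf{w}$ must contain at least one unit (otherwise $\mathbf{w}\in u\mathbb{Z}_{2}^{\beta}$ and $(\mathbf{v}|\mathbf{w})$ would belong to $\mathcal{C}_{b}$). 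Applying Lemma~\ref{lem:TW1} to this codeword gives $N(\mathbf{w})=n/4=\beta$ and $N_{u}(\mathbf{w})=0$, so every coordinate of $\mathbf{w}$ is a unit of $R$.

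Next, I will use the elementary fact that the sum of two units of $R$ always lies in $\{0,u\}$ (since $1+1=0$, $1+(1+u)=u$, and $(1+u)+(1+u)=0$). Hence the sum of any two such ``full-unit'' codewords has $R$-part in $u\mathbb{Z}_{2}^{\beta}$ and therefore lies in $\mathcal{C}_{b}$. Combined with the previous paragraph, this shows that all codewords in $\mathcal{C}\setminus \mathcal{C}_{b}$ belong to a single coset of $\mathcal{C}_{b}$ in $\mathcal{C}$, so $|\mathcal{C}|/|\mathcal{C}_{b}|\leq 2$.

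Finally, I will invoke Lemma~\ref{lem:selfdualcanshu} to compute $|\mathcal{C}|/|\mathcal{C}_{b}|=2^{\kappa+\beta}/2^{\kappa+\beta-\delta}=2^{\delta}$. Combining this with the bound $|\mathcal{C}|/|\mathcal{C}_{b}|\leq 2$ immediately yields $\delta\leq 1$.

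The crucial conceptual step is the first one: recognizing that the two-weight hypothesis is so restrictive that the $R$-part of any codeword outside $\mathcal{C}_{b}$ must be \emph{entirely} composed of units, rather than merely containing one. Once that is established, the coset-collapse in the second step is automatic, and the remainder is just bookkeeping with the orders from Lemma~\ref{lem:selfdualcanshu}.
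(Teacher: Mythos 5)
Your proof is correct and hinges on the same key fact as the paper's: Lemma~\ref{lem:TW1} forces any codeword outside $\mathcal{C}_b$ to have an $R$-part consisting entirely of units. The paper's own proof essentially stops there (``Since $\beta=n/4$, we get $\delta\leq1$''), leaving the final deduction implicit; your coset argument --- the sum of two units of $R$ lies in $\{0,u\}$, so all of $\mathcal{C}\setminus\mathcal{C}_b$ sits in one coset and $[\mathcal{C}:\mathcal{C}_b]=2^{\delta}\leq 2$ --- is precisely the justification the paper omits. One minor remark: you appeal to Lemma~\ref{lem:selfdualcanshu}, and hence to self-duality, to compute $|\mathcal{C}|/|\mathcal{C}_b|=2^{\delta}$; this ratio equals $2^{\delta}$ for any code of type $(\alpha,\beta;\gamma,\delta;\kappa)$ (since $|\mathcal{C}|=2^{\gamma+2\delta}$ and $|\mathcal{C}_b|=2^{\gamma+\delta}$), so the lemma holds, as in the paper, without the self-duality hypothesis --- though in the context where it is applied the code is self-dual, so nothing is lost.
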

\begin{proof}
We assume that $(\mathbf{v}|\mathbf{w})\in\mathcal{C} $ with $N(\mathbf{w})\neq0$. By Lemma~\ref{lem:TW1}, we have $N(\mathbf{w})=\frac{n}{4}$. Since $\beta=\frac{n}{4}$,   then we get $\delta\leq1$.
\end{proof}

With above preparation we can obtain the main result in this section.

\begin{theorem}
Let the notations be given as above. Then $\mathcal{C}$ is   self-dual   if and only if the generator matrix of  $\mathcal{C}$ is permutation equivalent to
$$\left(
   \begin{array}{cc|c}
     1 & 1 & 0 \\
     0 & 0 & u \\
   \end{array}
 \right)~{\rm or}~ \left(
   \begin{array}{cccc|cc}
     1 & 1 & 1 & 1 & 0 & 0 \\
     0 & 1 & 0 & 1 & 0 & u \\
     0 & 0 & 1 & 1 & 1 & 1 \\
   \end{array}
 \right).
$$
\end{theorem}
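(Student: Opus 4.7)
The plan is to treat the two directions separately. Sufficiency is a direct check: in each case one verifies that all pairs of rows are orthogonal under $\langle\cdot,\cdot\rangle$, so the code is self-orthogonal, and Lemma~\ref{lem:selfdualcanshu} promotes this to self-duality since $|\mathcal{C}|=2^{\kappa+\beta}$ matches. A short enumeration of codewords produces the weight enumerators $X^{4}+2X^{2}Y^{2}+Y^{4}$ and $X^{8}+14X^{4}Y^{4}+Y^{8}$, confirming that both codes are two-weight.

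For the converse, the first step restricts $n$ via MacWilliams. Writing $W_\mathcal{C}(X,Y)=X^{n}+Y^{n}+MX^{n/2}Y^{n/2}$ with $M=|\mathcal{C}|-2$ and expanding
\[
W_\mathcal{C}(X+Y,X-Y)=\bigl[(X+Y)^{n}+(X-Y)^{n}\bigr]+M(X^{2}-Y^{2})^{n/2},
\]
I compare the coefficient of $X^{n-2}Y^{2}$ on each side of the identity of Theorem~\ref{th:MacIden}. For $n\geq 6$ this coefficient vanishes on the left, giving $n(n-1)=Mn/2$ and hence $|\mathcal{C}|=2n$. Combining with $|\mathcal{C}|=2^{\kappa+\beta}=2^{n/2}$ (from Lemma~\ref{lem:selfdualcanshu}, using $\alpha=2\beta$ so $\kappa=\beta$), I obtain $2n=2^{n/2}$, whose only solution among multiples of $4$ at least $6$ is $n=8$. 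The case $n=4$ is handled separately by equating the coefficient of $X^{2}Y^{2}$, which immediately forces $|\mathcal{C}|=4$. Thus $n\in\{4,8\}$.

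When $n=4$ we have $|\mathcal{C}|=4=|\mathcal{C}^{*}|$, so $\mathcal{C}=\mathcal{C}^{*}$ is the code generated by the first matrix. When $n=8$ we have $\alpha=4$, $\beta=2$, $\kappa=2$, and Lemma~\ref{cor:delta1} gives $\delta\leq 1$. The case $\delta=0$ is excluded: by Corollary~\ref{cor:delta=0}, $\mathcal{C}$ would then be separable, forcing $\mathcal{C}_{X}$ to be a binary self-dual code of length $4$; every such code contains a Hamming weight $2$ vector, and lifting with $\mathbf{w}=\mathbf{0}$ produces a Lee weight $2$ codeword in $\mathcal{C}$, contradicting the two-weight property. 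Hence $\delta=1$ and $\gamma=2$.

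Finally I would specialize Theorem~\ref{th:standard form} to type $(4,2;2,1;2)$, producing a $3\times 6$ generator in standard form with free parameters $A_{1},T,S,B_{1},B_{2}$. Row-by-row self-orthogonality forces $A_{1}=I_{2}$ (up to swap of its two columns), $B_{1}=1$, and $s_{1}=s_{2}=t_{1}=t_{2}=:s$; the two-weight hypothesis rules out $s=0$, since otherwise the first row $(1,0,1,0\mid 0,0)$ has weight $2$. The remaining free bit $B_{2}\in\{0,1\}$ yields two matrices that a short direct calculation shows are related by the transposition of two $\mathbb{Z}_{2}$ coordinates, so they fall in the same permutation class; one elementary row operation and a swap of the two $R$ coordinates then transforms the standard form into the second displayed matrix. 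The main obstacle is this last case analysis, where one must carefully trace the self-orthogonality and two-weight constraints through the standard form and verify that the two surviving parameter values give permutation-equivalent codes.
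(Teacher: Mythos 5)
Your proposal is correct, and its main line --- the MacWilliams identity of Theorem~\ref{th:MacIden} applied to $W_{\mathcal{C}}(X,Y)=X^{n}+(|\mathcal{C}|-2)X^{n/2}Y^{n/2}+Y^{n}$, comparison of the $X^{n-2}Y^{2}$ coefficient to force $|\mathcal{C}|=2n$, hence $2n=2^{n/2}$ and $n\in\{4,8\}$, followed by $\delta=1$, $\gamma=2$ in the $n=8$ case --- is exactly the paper's argument (you make explicit the step $2n=2^{n/2}$ that the paper leaves implicit, and your exclusion of $\delta=0$ via an explicit Lee-weight-$2$ codeword is a more concrete version of the paper's appeal to Corollary~\ref{cor:delta=0}). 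Where you genuinely diverge is the endgame: the paper pins down the three generator rows directly from Lemma~\ref{lem:TW1} (forcing $(\mathbf{1}^{4}|\mathbf{0}^{2})$, a row with $N(\mathbf{w}_{3})=2$, $wt_{H}(\mathbf{v}_{3})=2$, and a row with $N_{u}(\mathbf{w}_{2})=1$, $wt_{H}(\mathbf{v}_{2})=2$, with non-separability forcing $wt_{H}(\mathbf{v}_{2}*\mathbf{v}_{3})$ odd), whereas you specialize the standard form of Theorem~\ref{th:standard form} to type $(4,2;2,1;2)$ and solve for the parameters $A_{1},T,S,B_{1},B_{2}$ from row-by-row orthogonality and the two-weight condition. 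Your route is more systematic and makes the uniqueness claim easier to audit --- in particular you must (and do) check that the two surviving values of $B_{2}$ give permutation-equivalent codes, a point the paper's row-level argument glosses over --- at the cost of a slightly longer computation; the paper's route is shorter but leaves more of the ``permutation equivalent'' bookkeeping to the reader. Both are valid proofs.
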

\begin{proof}
Let $G$ be the generator matrix of  $\mathcal{C}$.
By Lemma~\ref{lem:weightdis}, we have the weight  enumerator of $\mathcal{C}$ is
\begin{equation}\label{eq:1}
W_{\mathcal{C}}(X,Y)=X^{n}+(|\mathcal{C}|-2)X^{\frac{n}{2}}Y^{\frac{n}{2}}+Y^{n}.
\end{equation}
By Theorem~\ref{th:MacIden}, the weight enumerator of  $\mathcal{C}^\perp$ satisfies
\begin{eqnarray}\label{eq:2}
% \nonumber to remove numbering (before each equation)
\nonumber    & & |\mathcal{C}|W_{\mathcal{C}^{\perp}}(X,Y)= W_{\mathcal{C}}(X+Y,X-Y)  \\
 \nonumber  &=&  (X+Y)^{n}+(|\mathcal{C}|-2)(X^{2}-Y^{2})^{\frac{n}{2}}+(X-Y)^{n} \\
\nonumber&=&  X^{n}+C_{n}^{1}X^{n-1}Y+C_{n}^{2}X^{n-2}Y^{2}+\cdots+C_{n}^{n-1}XY^{n-1}+Y^{n}\\
\nonumber& & \quad +(|\mathcal{C}|-2)\left(X^{n}-C_{\frac{n}{2}}^{1}(X^{2})^{\frac{n}{2}-1}Y^{2}+\cdots+Y^{n}\right) \\
\nonumber& & \qquad\qquad +X^{n}-C_{n}^{1}X^{n-1}Y+C_{n}^{2}X^{n-2}Y^{2}+\cdots-C_{n}^{n-1}XY^{n-1}+Y^{n}\\
\nonumber&= & -(|\mathcal{C}|-2)\left(C_{\frac{n}{2}}^{1}(X^{2})^{\frac{n}{2}-1}Y^{2}-
C_{\frac{n}{2}}^{2}(X^{2})^{\frac{n}{2}-2}Y^{4}+\cdots+C_{\frac{n}{2}}^{\frac{n}{2}-1}X^{2}(Y^{2})^{\frac{n}{2}-1}\right)  \\
\nonumber&& \qquad\qquad \qquad +|\mathcal{C}|(X^{n}+Y^{n})+2C_{n}^{2}X^{n-2}Y^{2}+\cdots+2C_{n}^{n-2}X^{2}Y^{n-2}\\
\nonumber&=& |\mathcal{C}|X^{n}+ \left(2C_{n}^{2}-(|\mathcal{C}|-2)C_{\frac{n}{2}}^{1}\right)X^{n-2}Y^{2}+ \cdots\\
&&  \qquad\qquad\qquad\qquad\ + \left(2C_{n}^{2}-(|\mathcal{C}|-2)C_{\frac{n}{2}}^{\frac{n}{2}-1}\right)X^{2}Y^{n-2}+|\mathcal{C}|Y^{n}.
\end{eqnarray}
If  $\mathcal{C}$ is   self-dual, then $W_{\mathcal{C}^{\perp}}(X,Y)= W_{\mathcal{C}}(X,Y)$.
Comparing the coefficients of both sides of equation~\eqref{eq:2}, we discuss it in two cases.

i)  If  $n=4$, it is easy to check
$$W_{\mathcal{C}^{\perp}}(X,Y)= W_{\mathcal{C}}(X,Y)=X^{4}+2X^{2}Y^{2}+Y^{4}.$$
Note that $\alpha=\frac{n}{2}$, $\beta=\frac{n}{4}$, we have $\alpha=2,$ $\beta=1$. Since $|\mathcal{C}|=2^{\frac{n}{2}}=4$, and $(\mathbf{1}^\alpha|\mathbf{0}^\beta),(\mathbf{0}^\alpha|\mathbf{u}^\beta)\in\mathcal{C} $, thus $G$ is permutation equivalent to $$\mathcal{C}=\langle\left(
   \begin{array}{cc|c}
     1 & 1 & 0 \\
     0 & 0 & u \\
   \end{array}
 \right)\rangle.$$

ii)  If  $n>4$, then the right side of equation~\eqref{eq:2} has at least five terms. The second term of right side of equation~\eqref{eq:2} $\left(2C_{n}^{2}-(|\mathcal{C}|-2)C_{\frac{n}{2}}^{1}\right)X^{n-2}Y^{2}$ is equal to zero. Hence, $n=8$. Simplifying  equation~\eqref{eq:2}, one has
$$W_{\mathcal{C}^{\perp}}(X,Y)=X^{8}+14X^{4}Y^{4}+Y^{8}. $$ Since
$$W_{\mathcal{C}}(X,Y)=X^{8}+(|\mathcal{C}|-2)X^{4}Y^{4}+Y^{8}, $$ and $|\mathcal{C}|=2^{\frac{n}{2}}=16$, then $$W_{\mathcal{C}^{\perp}}(X,Y)=W_{\mathcal{C}}(X,Y).$$ This implies when $n=8$, there exists a self-dual code over $\mathbb{Z}_2R$.
Note that $\mathcal{C}$ is Type II, by Theorem~\ref{lem:alpha4beta2}, we get $\alpha=4$, $\beta=2$. Thus, $\mathcal{C}$ is non-separable. Otherwise, $\mathcal{C}$ is contradiction with two non-zero weights. By  Lemma~\ref{cor:delta1}, we have $\delta\leq1$. If $\delta=0$, by Corollary~\ref{cor:delta=0}, it is a contradiction. Hence,  $\delta=1$, $\gamma=2$.
Let $$G=\left(\begin{array}{c|c}
          \mathbf{v}_1 & \mathbf{w}_1 \\
          \mathbf{v}_2 & \mathbf{w}_2   \\
          \mathbf{v}_3 & \mathbf{w}_3
        \end{array}\right)
$$, where $\mathbf{v}_i\in \mathbb{Z}_2^4$ for $1\leq i\leq3$, $\mathbf{w}_1,\mathbf{w}_2\in\{0,u\}^2$, $\mathbf{w}_3\in R^2$.
By Lemma~\ref{lem:TW1}, we have $N(\mathbf{w}_3)=2,~wt_H(\mathbf{v}_3)=2,$ i.e. $\mathbf{w}_3=(11)$.  Now we fix the sequence of $(\mathbf{v}_3 | \mathbf{w}_3)=(0011|11)$. Note that $(\mathbf{1}^4|\mathbf{0}^2)\in \mathcal{C}$, then we let $(\mathbf{v}_1 | \mathbf{w}_1)=(\mathbf{1}^4|\mathbf{0}^2)$. Finally, we need to choose  $ (\mathbf{v}_2 | \mathbf{w}_2 )$. Since $ (\mathbf{v}_2 | \mathbf{w}_2 )\notin \mathcal{C}^*$, then by Lemma~\ref{lem:TW1}, we have $N_u(\mathbf{w}_2)=1,~wt_H(\mathbf{v}_2)=2$. Since  $\mathcal{C}$ is non-separable, then $wt_H(\mathbf{v}_2*\mathbf{v}_3)$ is odd. To sum up, we get $G$ is permutation equivalent to
$$\left(
   \begin{array}{cccc|cc}
     1 & 1 & 1 & 1 & 0 & 0 \\
     0 & 1 & 0 & 1 & 0 & u \\
     0 & 0 & 1 & 1 & 1 & 1 \\
   \end{array}
 \right)$$

Conversely, it is easy to check.
We finish the proof.
\end{proof}

From equation~\eqref{eq:2}, we get the bound of the minimal distance of dual code $\mathcal{C}^\perp$.
\begin{theorem}
Let the notations be given as above. If $n=\frac{|\mathcal{C}|}{2}$, then the minimal Hamming
distance of $\mathcal{C}^\perp$ is $4$. Otherwise, the minimal Hamming
distance of $\mathcal{C}^\perp$ is $2$.
%In particular, if $n$ contains  odd factors,   the minimal  distance of $\mathcal{C}^\perp$ is $2$.
\end{theorem}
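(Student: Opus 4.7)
My approach is to mine the MacWilliams identity for $\mathcal{C}^\perp$ that was already assembled in equation~\eqref{eq:2} during the proof of the previous theorem, and simply to read off the first few low-degree coefficients. The key preliminary observation is that equation~\eqref{eq:2} displays $|\mathcal{C}|W_{\mathcal{C}^\perp}(X,Y)$ as $(X+Y)^n+(X-Y)^n+(|\mathcal{C}|-2)(X^2-Y^2)^{n/2}$, and each of these three pieces is a polynomial in $X^2$ and $Y^2$ (up to the overall factor $X^n$). Hence the coefficient of $X^{n-i}Y^i$ vanishes for every odd $i$, so the minimum distance of $\mathcal{C}^\perp$ is automatically even and the only question is whether it equals $2$ or is at least $4$.

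Next, I extract the coefficient of $X^{n-2}Y^2$ in $|\mathcal{C}|W_{\mathcal{C}^\perp}(X,Y)$ straight from equation~\eqref{eq:2}: it is $2C_n^2-(|\mathcal{C}|-2)C_{n/2}^1$, which a short calculation simplifies to
\[
\frac{n}{2}\bigl(2n-|\mathcal{C}|\bigr).
\]
This vanishes precisely when $|\mathcal{C}|=2n$, i.e.\ when $n=|\mathcal{C}|/2$. So when $n\neq |\mathcal{C}|/2$, the coefficient is a nonzero weight-enumerator frequency, hence a strictly positive integer (negative values are excluded automatically, forcing $|\mathcal{C}|<2n$ in that case), and $\mathcal{C}^\perp$ contains weight-$2$ codewords. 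This settles the ``otherwise'' branch of the theorem.

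For the main case $n=|\mathcal{C}|/2$, the coefficient above is zero, so I then read off the next nontrivial even-degree coefficient, that of $X^{n-4}Y^4$. From equation~\eqref{eq:2} this equals $2C_n^4+(|\mathcal{C}|-2)C_{n/2}^2$, which is manifestly positive since both summands are non-negative and $4\mid n$ forces $n\geq 4$. Hence $\mathcal{C}^\perp$ does contain weight-$4$ codewords, and combined with the vanishing of weight-$2$ (and odd-weight) frequencies this pins the minimum distance at exactly $4$.

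I do not foresee any real obstacle: the whole argument is a two-step coefficient comparison inside equation~\eqref{eq:2}. The only points that require a line of care are (i) confirming that the odd-index coefficients vanish identically, so that $2$ and $4$ are the only candidates we need to test, and (ii) promoting ``the coefficient of $X^{n-2}Y^2$ is nonzero'' to ``strictly positive'' by using that weight-enumerator coefficients are non-negative integers; both are routine.
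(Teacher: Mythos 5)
Your proposal is correct and follows exactly the route the paper intends: the paper offers no written proof beyond pointing at equation~\eqref{eq:2}, and the intended argument is precisely your coefficient extraction, with the coefficient of $X^{n-2}Y^{2}$ equal to $\tfrac{n}{2}(2n-|\mathcal{C}|)$ deciding between distance $2$ and distance $4$, and the vanishing of all odd-index coefficients plus the positivity of $2C_n^4+(|\mathcal{C}|-2)C_{n/2}^2$ completing the case $n=|\mathcal{C}|/2$.
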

%%%%%%%%%%%%%%%%%%%%%%%%%%%%%%%%%%%%%%%%%%%%%%%%%%%%%%%%%%%

%%%%%%%%%%%%%%%%%%%%%%%%%%%%%%%%%%%%%%%%%%%%%%%%%%%%%%%%%


\begin{thebibliography}{99}
\bibitem{Delsarte1973}  P. Delsarte,``An algebraic approch the association schemes of coding theory'', \emph{Philips Res.Rep.Suppl.}, \textbf{10}, 1973.
\bibitem{Delsarte1998} P. Delsarte, V. Levenshtein,``Association Schemes and Coding Theory'', \emph{IEEE Trans.Inform.Theory}, \textbf{44}(6), 2477--2504, 1998.
\bibitem{Bachoc2000}  C. Bachoc, P. Gaborit,``On extremal additive $F_{4}$ codes of lengh $10$ to 18'', \emph{Journal de th$\acute{e}$orie des nombres the Bordeaux}, \textbf{12}(2), 255--271, 2000.
\bibitem{Bierbrauer2005}  J. Bierbrauer, ``Introduction to coding theory'', Chapman and Hall/CRC, 2005.
\bibitem{Blokhuis2004}  A. Blokhuis and A.E. Brouwer,``Small additive quaternary codes'', \emph{European Journal of Combinatorics}, \textbf{25}, 161--167, 2004.
\bibitem{J-L.Kim2003} J.L. Kim, V. Pless,``Designs in Additive Codes over $GF(4)$'', \emph{Designs,Codes and Cryptography}, \textbf{30}, 187--199, 2003.
\bibitem{Borges2009} J. Borges, C. Fern$\acute{a}$ndez-C$\acute{o}$rdoba, J. Pujol,J. Rif$\grave{a}$, M. Villanueva, ``$Z_{2}Z_{4}$-linear codes:generator matrices and duality'' , to appear in Designs,Codes and Cryptography, 2009.
\bibitem{Ismail Aydogdu2014} I. Aydogdu, T. Abualrub, I. Siap, ``On $Z_{2}Z_{2}[u]$-additive codes'', to appear in \emph{International Journal of Computer Mathematics}, 2014.
\bibitem{Borgesselfdual} J. Borges, S.T. Dougherty, C. Fern$\rm \acute{a}$ndez-C$\rm \acute{o}$rdoba, ``Self-Dual Codes over $\mathbb{Z}_2\times \mathbb{Z}_4$,'' arXiv:0910.3084v1.

\bibitem{KimLee} J. L. Kim, Y. Lee, ``Euclidean and Hermitian self-dual MDS codes over large finite
fields,'' J. Combin. Theory Ser. A, \textbf{ 1}(105),  79--95,   2004.










\end{thebibliography}
\end{document}